\newtheorem{thm}{Theorem}[section]
\newtheorem{prop}[thm]{Proposition}
\newtheorem{lem}[thm]{Lemma}
\newtheorem{cor}[thm]{Corollary}
\newtheorem{Q}{Question}
\theoremstyle{remark}
\newtheorem{rem}[thm]{Remark}
\newcommand{\ZZ}{\mathbb{Z}}
\newcommand{\RR}{\mathbb{R}}
\DeclareMathOperator{\wt}{wt}
\begin{document}
\title{Construction of extremal Type~II $\ZZ_{2k}$-codes}

\author{
Masaaki Harada\thanks{
Research Center for Pure and Applied Mathematics,
Graduate School of Information Sciences,
Tohoku University, Sendai 980--8579, Japan.
email: {\tt mharada@tohoku.ac.jp}.}
}

%\date{}

\maketitle

\begin{abstract}
We give methods for constructing many self-dual $\ZZ_m$-codes
and Type~II $\ZZ_{2k}$-codes of length $2n$
starting from a given self-dual $\ZZ_m$-code and
Type~II $\ZZ_{2k}$-code of length $2n$, respectively.
As an application, 
we construct extremal Type~II $\ZZ_{2k}$-codes
of length $24$ for $k=4,5,\ldots,20$ and 
extremal Type~II $\ZZ_{2k}$-codes of length $32$ for $k=4,5,\ldots,10$.
We also construct new extremal Type~II $\ZZ_4$-codes of lengths $56$ and $64$.
\end{abstract}

%%%%%%%%%%%%%%%%%%%%%%%%%%%%%%%%%%%%%%%%%%
\section{Introduction}
A $\ZZ_{m}$-code of length $n$
is a $\ZZ_{m}$-submodule of $\ZZ_{m}^n$,
where $m$ is a positive integer with $m \ge 2$ and 
$\ZZ_{m}$ denotes the ring of integers modulo $m$.
%An element of $C$ is called a codeword.
A $\ZZ_{m}$-code $C$ of length $n$ is {\em self-dual} if $C=C^\perp$, where
$C^\perp$ denotes the dual code of $C$.
Self-dual codes are one of the most interesting classes of codes.
This interest is justified by many combinatorial objects
and algebraic objects related to self-dual codes
(see e.g.\ \cite{RS-Handbook} and the references given therein).

Many methods for constructing self-dual codes are known.
For example, 
starting from a given self-dual code with generator matrix 
$\left( \begin{array}{cc}
I_n & A
\end{array}\right)$,
by transforming $A$,
some methods for constructing many self-dual codes are known,
where $I_n$ denotes the identity matrix of order $n$
(see e.g.\ \cite{H98}, \cite{HK}, \cite{IS} and~\cite{Tonchev}).
In this paper,
we give a new method for constructing many self-dual $\ZZ_m$-codes
starting from a given self-dual $\ZZ_m$-code
with generator matrix
$\left( \begin{array}{cc}
I_n & A
\end{array}\right)$
by transforming $A$.
% This generalizes the method given in~\cite{IS}.
For self-dual codes over finite fields,
the corresponding result can be found in~\cite{IS}.

Now let us consider the case $m=2k$.
A binary doubly even self-dual code is often called Type~II.
Self-dual $\ZZ_{2k}$-codes with the property that all
Euclidean weights are multiples of $4k$ are called
\emph{Type~II} $\ZZ_{2k}$-codes
%, which are a generalization of binary Type~II codes
(see~\cite{Z4-BSBM} and~\cite{Z4-HSG} for $k=2$ and~\cite{BDHO} for $k \ge 3$).
Type~II $\ZZ_{2k}$-codes are a remarkable class of self-dual codes
related to even unimodular lattices.
%As a typical example,
For example, by Construction A,
Type~II $\ZZ_{2k}$-codes give even unimodular lattices.
There is a Type~II $\ZZ_{2k}$-code of length $n$
if and only if $n$ is divisible by eight~\cite{BDHO}.
% The minimum Euclidean weight of a Type~II $\ZZ_{2k}$-code $C$
% is the smallest Euclidean
% weight among all nonzero codewords of $C$.
For Type~II $\ZZ_{2k}$-codes $C$ of length $n$, 
the following upper bound on the minimum Euclidean weight
\begin{equation}\label{eq:B}
d_E(C) \le  4k  \left\lfloor \frac{n}{24} \right\rfloor +4k
\end{equation}
holds if $k=1$~\cite{MS}, if $k=2$~\cite{Z4-BSBM} (see also~\cite{Z4-HSG})
and if
$k\in \{3,4,5,6\}$~\cite{HM}.
%For $k \ge 7$,
Also, the bound~\eqref{eq:B} holds under the
assumption that $\lfloor n/24 \rfloor\leq k-2$ for arbitrary $k$~\cite{BDHO}.
When the bound~\eqref{eq:B} holds, 
we say that a Type~II $\ZZ_{2k}$-code meeting~\eqref{eq:B}
with equality is \emph{extremal}.
There have been significant researches on constructing
% Much work has been done concerning
extremal Type~II $\ZZ_{2k}$-codes for $k=1$ and $2$
(see e.g.\ \cite{Z4-BSBM}, \cite{H98}, \cite{HK}, \cite{Z4-HSG},
\cite{RS-Handbook}, \cite{Tonchev} and the references given therein).
%% There has been recent interest in self-dual codes over finite rings,
%% These past 25 years
%% decades
In this paper, we also give a new method for constructing many
Type~II $\ZZ_{2k}$-codes
starting from a given Type~II $\ZZ_{2k}$-code
with generator matrix
$\left( \begin{array}{cc}
I_n & A
\end{array}\right)$
by transforming $A$.
This generalizes the methods given in~\cite{H98} and~\cite{IS}.
By four-negacirculant codes and by the new method,
we construct
extremal Type~II $\ZZ_{2k}$-codes of length $24$ for $k=4,5,\ldots,20$ 
and 
extremal Type~II $\ZZ_{2k}$-codes
of length $32$ for $k=4,5,\ldots,10$. 
We also construct new extremal Type~II $\ZZ_4$-codes of lengths $56$ and $64$.

The paper is organized as follows.
In Section~\ref{sec:2},
we give the definitions and basic facts
on self-dual $\ZZ_m$-codes and Type~II $\ZZ_{2k}$-codes
used throughout this paper.
In Section~\ref{sec:method},
we give a method for constructing self-dual $\ZZ_m$-codes
(resp.\ Type~II $\ZZ_{2k}$-codes) of length $2n$
starting from a given self-dual $\ZZ_m$-code (resp.\ Type~II $\ZZ_{2k}$-code)
of length $2n$
having generator matrix 
$\left( \begin{array}{cc}
I_n & A
\end{array}\right)$,
by transforming $A$ (Theorem~\ref{thm:main} (resp.\
Theorem~\ref{thm:II})).
In Section~\ref{sec:24}, we construct extremal Type~II $\ZZ_{2k}$-codes
of length $24$,
by four-negacirculant codes and by Theorem~\ref{thm:II}
for $k=4,5,\ldots,20$.
Consequently, it is shown that if $k \in \{2,3,\ldots,20\}$ 
then there is an extremal Type~II $\ZZ_{2k}$-code $C$
such that $C^{(2)} \cong B$ 
for every binary doubly even self-dual code $B$ of length $24$,
where $C^{(2)}$ denotes the binary part of $C$ and 
$C \cong D$ means that $C$ and $D$ are equivalent.
In Section~\ref{sec:32}, we construct extremal Type~II $\ZZ_{2k}$-codes
of length $32$,
by four-negacirculant codes and by Theorem~\ref{thm:II}
for $k=4,5,\ldots,10$.
Consequently, it is shown that if $k \in \{2,3,\ldots,10\}$
then there is an extremal Type~II $\ZZ_{2k}$-code $C$
such that $C^{(2)} \cong B$ for every binary extremal doubly even self-dual code 
$B$ of length $32$.
In Section~\ref{sec:56-64}, by four-negacirculant codes and by Theorem~\ref{thm:II},
we construct new extremal Type~II $\ZZ_4$-codes of lengths $56$ and $64$.

All computer calculations in this paper
were done with the help of {\sc Magma}~\cite{Magma}.

%%%%%%%%%%%%%%%%%%%%%%%%%%%%%%%%%%%%%%%%%%
\section{Preliminaries}\label{sec:2}

In this section, we give the definitions and basic facts
on self-dual $\ZZ_m$-codes and Type~II $\ZZ_{2k}$-codes
used throughout this paper.

%%%%%%%%%%%%%%%%%%%%%%%%%%%%%
%\subsection{Self-dual $\ZZ_{m}$-codes and Type~II $\ZZ_{2k}$-codes}

Let $\ZZ_{m}$ be the ring of integers modulo $m$, where $m$ is a positive
integer with $m \ge 2$.
A $\ZZ_{m}$-code $C$ of length $n$
is a $\ZZ_{m}$-submodule of $\ZZ_{m}^n$.
An element of $C$ is called a {\em codeword}.
A {\em generator matrix} of $C$ is a matrix whose rows generate $C$.
We define the inner product of
$x=(x_1,x_2,\ldots,x_n)$ and $y=(y_1,y_2,\ldots,y_n) \in \ZZ_m^n$ by
\[
\langle x,y\rangle  = x_1 y_1 + x_2y_2 + \cdots + x_n y_n.
\]
The {\em dual code} $C^\perp$ of $C$ is defined as
\[
C^\perp = \{ x \in \ZZ_{m}^n \mid \langle x,y\rangle = 0 \text{ for all }
y \in C\}.
\]
A code $C$ is {\em self-dual} if $C=C^\perp$.
Two $\ZZ_m$-codes $C$ and $C'$ are {\em equivalent},
denoted $C \cong C'$,
if one can be obtained from the
other by permuting the coordinates and (if necessary)
changing the signs of certain coordinates.

We now consider self-dual $\ZZ_{2k}$-codes.
In this paper, we take the set $\ZZ_{2k}$ to be either
$\{0,1,\ldots,2k-1\}$ or
$\{0,\pm 1,\ldots,\pm(k-1),k\}$, using whichever form is more convenient.
%Let $\rho$ be a map from $\ZZ_{2k}$ to $\ZZ$ sending $0, 1, \ldots , k$
%to $0, 1, \ldots , k$ and $k+1,k+2, \ldots , 2k-1$ to $1-k, 2-k,\ldots , -1$,
%respectively.
We define a map $\rho$ from $\ZZ_{2k}$ to $\ZZ$ as follows
\begin{equation}\label{eq:rho}
\rho(i)=
\begin{cases}
0, \ldots , k & \text{ if } i=0, \ldots , k, \text{ respectively,}\\
1-k,\ldots , -1 & \text{ if } i=k+1, \ldots , 2k-1, \text{ respectively.}
\end{cases}
\end{equation}
The {\em Euclidean weight} $\wt_E(v)$ of a vector $v=(v_1,v_2,\ldots,v_n) \in \ZZ_{2k}^n$
is
\[
\sum_{i=1}^{k} i^2 n_i(v)+
\sum_{i=k+1}^{2k-1} (2k-i)^2 n_i(v),
\]
where
\[
n_i(v)=|\{j \in \{1,2,\ldots,n\}\mid
\rho(v_j) \equiv i \pmod{2k} \}|
\ (i=0,1,\ldots,2k-1).
\]
A binary doubly even self-dual code is often called Type~II.
Using Euclidean weights, 
the notion of binary Type~II codes has been generalized as follows.
Self-dual $\ZZ_{2k}$-codes with the property that all
Euclidean weights are multiples of $4k$ are called
\emph{Type~II} $\ZZ_{2k}$-codes
(see~\cite{Z4-BSBM} and~\cite{Z4-HSG} for $k=2$ and~\cite{BDHO} for $k \ge 3$).
It was shown in~\cite{BDHO} that there is a Type~II $\ZZ_{2k}$-code of
length $n$ if and only if $n$ is divisible by eight.

Throughout this paper,
$I_n$ denotes the identity matrix of order $n$, and
$A^T$ denotes the transpose of a matrix $A$.
The following gives a criteria for self-dual $\ZZ_{2k}$-codes
and Type~II $\ZZ_{2k}$-codes.

\begin{lem}\label{lem:SD}
\begin{itemize}
\item[\rm (i)]
Let  $C$ be a $\ZZ_{2k}$-code of length $2n$ having
generator matrix of form
$\left( \begin{array}{cc}
I_n & A
\end{array}\right)$, 
where $A$ is an $n \times n$ matrix.
If $AA^T=-I_n$, then $C$ is self-dual.
\item[\rm (ii)]
Let  $C$ be a self-dual $\ZZ_{2k}$-code  having
generator matrix $G$.
If each row of $G$ has Euclidean weight divisible by $4k$, then
$C$ is Type~II.
\end{itemize}
\end{lem}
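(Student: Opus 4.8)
The plan is to verify each part directly from the definitions of the dual code and of Type~II codes. For part (i), write $G = \left(\begin{array}{cc} I_n & A\end{array}\right)$, so the rows of $G$ generate $C$. First I would check that $C \subseteq C^\perp$: the $(i,j)$ entry of $GG^T$ is the inner product of rows $i$ and $j$ of $G$, and a block computation gives $GG^T = I_n I_n^T + A A^T = I_n + AA^T$, which is the zero matrix over $\ZZ_{2k}$ precisely when $AA^T = -I_n$. Hence every pair of generators of $C$ is orthogonal, and since the inner product is bilinear, $C \subseteq C^\perp$. Next I would pin down $|C|$ and $|C^\perp|$: because $G$ contains $I_n$ as a submatrix, the map $\ZZ_{2k}^n \to C$ sending a coefficient vector to the corresponding linear combination of rows is injective (the first $n$ coordinates recover the coefficients), so $|C| = (2k)^n$. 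The standard counting identity $|C|\cdot|C^\perp| = (2k)^{2n}$ for $\ZZ_{2k}$-codes of length $2n$ — which holds for codes containing such an $I_n$ block, or more generally by the module-duality argument — then forces $|C^\perp| = (2k)^n = |C|$, and combined with $C \subseteq C^\perp$ this yields $C = C^\perp$.

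For part (ii), the key point is that the Euclidean weight behaves well under the (non-linear) map to lattice coordinates via $\rho$. Concretely, for $v \in \ZZ_{2k}^n$ let $\bar v \in \ZZ^n$ be the componentwise lift using $\rho$; then $\wt_E(v) = \langle \bar v, \bar v\rangle_{\ZZ} = \sum_j \bar v_j^{\,2}$. The crucial lemma to isolate is that for any $u, v \in \ZZ_{2k}^n$ with $\langle u, v\rangle \equiv 0 \pmod{2k}$, one has
\[
\wt_E(u+v) \equiv \wt_E(u) + \wt_E(v) \pmod{4k}.
\]
This follows because $\overline{u+v}$ differs from $\bar u + \bar v$ in each coordinate by a multiple of $2k$, and expanding $\langle \bar u + \bar v, \bar u + \bar v\rangle = \langle\bar u,\bar u\rangle + 2\langle \bar u,\bar v\rangle + \langle\bar v,\bar v\rangle$; the cross term $2\langle\bar u,\bar v\rangle$ is a multiple of $4k$ since $\langle\bar u,\bar v\rangle \equiv \langle u,v\rangle \equiv 0\pmod{2k}$, and the correction terms from replacing $\bar u + \bar v$ by $\overline{u+v}$ also turn out to be multiples of $4k$ once one uses self-duality of $C$ (so that $\langle u, u\rangle, \langle u, v\rangle \equiv 0 \pmod{2k}$ for all codewords). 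Granting this additivity mod $4k$, an arbitrary codeword of $C$ is a $\ZZ$-combination (equivalently a sum, using that scaling a row $r$ times contributes $\equiv r\cdot\wt_E(\text{row})$ suitably — this needs a brief separate check) of the rows of $G$, each of which has Euclidean weight $\equiv 0 \pmod{4k}$ by hypothesis; applying the additivity repeatedly shows $\wt_E$ of every codeword is $\equiv 0 \pmod{4k}$, i.e.\ $C$ is Type~II.

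The main obstacle is the bookkeeping in part (ii): the map $v \mapsto \wt_E(v)$ is not additive on the nose, and one must carefully control both the cross terms and the $2k$-correction terms arising from the reduction $\rho$, and also handle the contribution of a codeword that is a scalar multiple (not just a sum) of a single generator row. The self-duality of $C$ is exactly what makes all the relevant inner products vanish mod $2k$, which is what upgrades "additive mod $2k$" to "additive mod $4k$." Part (i) is essentially bookkeeping with block matrix multiplication plus the standard cardinality count, and should present no real difficulty.
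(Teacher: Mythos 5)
Your proposal is correct, but it is worth noting that the paper itself gives no argument at all: part (i) is dismissed as trivial and part (ii) is delegated to Lemma~2.2 of the Bannai--Dougherty--Harada--Oura paper \cite{BDHO}. What you have written is essentially the standard proof hiding behind that citation: the identity $\wt_E(v)=\sum_j\rho(v_j)^2$, the congruence $\wt_E(u+v)\equiv\wt_E(u)+\wt_E(v)\pmod{4k}$ whenever $\langle u,v\rangle\equiv 0\pmod{2k}$, and an induction over sums of generator rows (including repeated rows, which handles scalar multiples since $\langle g,g\rangle=0$ in a self-orthogonal code). Two small remarks. First, in your additivity lemma the attribution is slightly off: the correction terms coming from replacing $\bar u+\bar v$ by $\overline{u+v}$ are multiples of $4k$ automatically, since each coordinate changes by $2kt_j$ and $(a+2kt)^2-a^2=4kta+4k^2t^2$; self-duality (or rather self-orthogonality) is needed only to make the cross term $2\langle\bar u,\bar v\rangle$ divisible by $4k$. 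Second, in part (i) you can avoid invoking the general identity $|C|\,|C^\perp|=(2k)^{2n}$ (true for $\ZZ_m$, but not strictly needed here): a vector $x=(x',x'')$ lies in $C^\perp$ exactly when $x'=-Ax''{}$ (componentwise, $x'^T=-Ax''^T$), so $x$ is determined by its free last $n$ coordinates and $|C^\perp|=(2k)^n=|C|$, which together with $GG^T=I_n+AA^T=0$ gives $C=C^\perp$ directly. Neither point is a gap; your argument is complete in substance and simply makes explicit what the paper outsources.
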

\begin{proof}
(i) is trivial.
(ii) follows from~\cite[Lemma~2.2]{BDHO}.
\end{proof}

The {\em minimum Euclidean weight} $d_E(C)$ of a Type~II $\ZZ_{2k}$-code $C$
is the smallest Euclidean
weight among all nonzero codewords of $C$.
For Type~II $\ZZ_{2k}$-codes of length $n$, 
the upper bound~\eqref{eq:B} on the minimum Euclidean weight $d_E(C)$
holds if $k=1$~\cite{MS}, if $k=2$~\cite{Z4-BSBM} and~\cite{Z4-HSG}
and if
$k\in \{3,4,5,6\}$~\cite{HM}.
%For $k \ge 7$,
Also, the bound~\eqref{eq:B} holds under the
assumption that $\lfloor n/24 \rfloor\leq k-2$ for arbitrary $k$~\cite{BDHO}.
When the bound~\eqref{eq:B} holds, 
we say that a Type~II $\ZZ_{2k}$-code meeting~\eqref{eq:B}
with equality is \emph{extremal}.

The {\em binary part} $C^{(2)}$ of a Type~II $\ZZ_{2k}$-code $C$
of length $n$ is defined as
% \[
% C^{(2)}=\{\rho(c) \pmod{2} \mid c \in C\},
% \]
% where $\rho(c)=(\rho(c_1),\rho(c_2),\ldots,\rho(c_n))$ for
% $c=(c_1,c_2,\ldots,c_n)$.
\[
\{(\rho(c_1) \!\!\!\pmod{2},\rho(c_2) \!\!\!\pmod{2},\ldots,
\rho(c_n)\!\!\!\pmod{2}) 
\mid (c_1,c_2,\ldots,c_n) \in C\},
\]
where $\rho$ is the map given in~\eqref{eq:rho}.
If a Type~II $\ZZ_{2k}$-code 
has generator matrix of  form
$\left( \begin{array}{cc}
I_n & A
\end{array}\right)$, then the binary part 
is a binary doubly even self-dual code~\cite{DGH} and~\cite{DHS}.
The following is trivial.

\begin{lem}\label{lem:equiv}
Let  $C$ and $C'$ be Type~II $\ZZ_{2k}$-codes.
If $C \cong C'$, then
$C^{(2)} \cong {C'}^{(2)}$.
\end{lem}

%%%%%%%%%%%%%%%%%%%%%%%%%%%%%
%\subsection{Self-dual codes from negacirculant matrices}

An $n \times n$
\emph{negacirculant} matrix has the following form
\[
\left( \begin{array}{ccccc}
r_0&r_1&r_2& \cdots &r_{n-1} \\
-r_{n-1}&r_0&r_1& \cdots &r_{n-2} \\
-r_{n-2}&-r_{n-1}&r_0& \cdots &r_{n-3} \\
\vdots &\vdots & \vdots && \vdots\\
-r_1&-r_2&-r_3& \cdots&r_0
\end{array}
\right).
\]
%Throughout this note, 
%let $I_n$ denote the identity matrix of order $n$.
Let $C$ be the $\ZZ_m$-code of length $4n$
having the following generator matrix
\begin{equation} \label{eq:4}
\left(
\begin{array}{ccc@{}c}
\quad & {\Large I_{2n}} & \quad &
\begin{array}{cc}
A & B \\
-B^T & A^T
\end{array}
\end{array}
\right),
\end{equation}
where $A$ and $B$ are $n \times n$ negacirculant matrices.
Such a code is called a \emph{four-negacirculant} code~\cite{HHKK}.
By Lemma~\ref{lem:SD} (i), 
if  $AA^T+BB^T=-I_n$, then $C$ is self-dual.
Let $C$ be a four-negacirculant self-dual $\ZZ_{2k}$-code with
generator matrix of form~\eqref{eq:4}.
Let $r_A$ and $r_B$ denote the first rows of $A$ and $B$, respectively.
By Lemma~\ref{lem:SD} (ii), 
if $\wt_E(r_A)+\wt_E(r_B) \equiv -1 \pmod{4k}$, then
$C$ is Type~II.

We end this section by describing unimodular lattices and Construction A.
% A (Euclidean) lattice $L \subset \RR^n$ in dimension $n$
% is {\em unimodular} if
% $L = L^{*}$, where
% the dual lattice $L^{*}$ of $L$ is defined as
% $\{ x \in {\RR}^n \mid (x,y) \in \ZZ \text{ for all }
% y \in L\}$ under the standard inner product $(x,y)$.
A (Euclidean) integral lattice $L \subset \RR^n$
in dimension $n$
is {\em unimodular} if $L = L^{*}$, where
$L^{*}$ is the dual lattice under the standard inner product $(x,y)$.
A unimodular lattice $L$ is {\em even}
if the norm $(x,x)$ of every vector $x$ of $L$ is even,
and {\em odd} otherwise.
There is an even unimodular lattice in dimension $n$
if and only if $n$ is divisible by eight.
The {\em minimum norm} $\min(L)$ of
a unimodular lattice $L$ is the smallest
norm among all nonzero vectors of $L$.
The {\em kissing number} of $L$ is the number of vectors of minimum norm in $L$.
We now give a method to construct
even unimodular lattices from Type~II $\ZZ_{2k}$-codes, which
is called Construction A.
% It is known~\cite{BDHO} that 
% if $C$ is a Type~II $\ZZ_{2k}$-code of length $n$, then the lattice
% \[
% A_{2k}(C)=\frac{1}{\sqrt{2k}}\{\rho (C) +2k \ZZ^{n}\}
% \]
% is an even unimodular lattice, where
% \[
% \rho (C)=\{(\rho (c_{1}), \rho (c_{2}), \ldots , \rho (c_{n}))\
% \vert\ (c_{1}, c_2,\ldots , c_{n}) \in C\}.
% \]
% 
For a Type~II $\ZZ_{2k}$-code $C$ of length $n$, 
define the following lattice
\begin{equation}\label{eq:A}
A_{2k}(C)=\frac{1}{\sqrt{2k}}\{\rho (C) +2k \ZZ^{n}\},
\end{equation}
where
\[
\rho (C)=\{(\rho (c_{1}), \rho (c_{2}), \ldots , \rho (c_{n}))\
\vert\ (c_{1}, c_2,\ldots , c_{n}) \in C\}.
\]
It is known that $A_{2k}(C)$ is an even unimodular lattice
with minimum norm $\min\{d_E(C)/2k,2k\}$,
where $d_E(C)$ denotes the minimum Euclidean weight of $C$~\cite{BDHO}.
The minimum norms of $A_{2k}(C)$ are used to determine the
minimum Euclidean weights of $C$ in Sections~\ref{sec:24},
\ref{sec:32} and~\ref{sec:56-64}.

%%%%%%%%%%%%%%%%%%%%%%%%%%%%%%%%%%%%%%%%%%
\section{Methods for constructing self-dual $\ZZ_m$-codes}
\label{sec:method}

In this section, starting from a given self-dual $\ZZ_m$-code
(resp.\ Type~II $\ZZ_{2k}$-code)
with generator matrix
$\left( \begin{array}{cc}
I_n & A
\end{array}\right)$, by transforming $A$,
we give a new method for constructing many
self-dual $\ZZ_m$-codes
(resp.\ Type~II $\ZZ_{2k}$-codes).

%%%%%%%%%%%%%%%%%%%%%%
\begin{thm}\label{thm:main}
  Let $C$ be a self-dual $\ZZ_m$-code of length $2n$
  having generator matrix 
$\left( \begin{array}{cc}
I_n & A
\end{array}\right)$.
Let $r_i$ be the $i$-th row of $A$.
Let $x$ and $y$ be vectors of $\ZZ_m^{n}$.
Define an $n \times n$ 
matrix $A(x,y)$, where the $i$-th row $r'_i$ is given by
\[
r'_i = r_i + \langle r_i,y \rangle x- \langle r_i,x \rangle y.
\]
Let $C(A,x,y)$ be the $\ZZ_m$-code having the following generator matrix
\[
\left( \begin{array}{cc}
I_n & A(x,y)
\end{array}\right).
\]
If
$
\langle x,x \rangle=
\langle y,y \rangle=
\langle x,y \rangle=0$,
then $C(A,x,y)$ is a self-dual $\ZZ_m$-code of length $2n$.
\end{thm}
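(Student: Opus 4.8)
The plan is to apply Lemma~\ref{lem:SD}~(i): it suffices to show that $A(x,y)A(x,y)^T = -I_n$, given that $AA^T = -I_n$ (which holds since $C$ is self-dual with the stated generator matrix). So the whole proof reduces to a matrix computation showing the row transformation $r_i \mapsto r'_i = r_i + \langle r_i,y\rangle x - \langle r_i,x\rangle y$ preserves the Gram matrix $AA^T$.

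First I would compute the $(i,j)$ entry of $A(x,y)A(x,y)^T$, which is $\langle r'_i, r'_j\rangle$. Expanding bilinearly,
\[
\langle r'_i, r'_j\rangle
= \langle r_i, r_j\rangle
+ \langle r_i, y\rangle\langle x, r'_j\rangle
- \langle r_i, x\rangle\langle y, r'_j\rangle
+ \text{(terms from the $r_j$ substitution)}.
\]
Rather than this asymmetric route, I would substitute both rows at once and collect terms. Writing $a = \langle r_i,y\rangle$, $b = \langle r_i,x\rangle$, $c = \langle r_j,y\rangle$, $d = \langle r_j,x\rangle$, one gets
\[
\langle r'_i,r'_j\rangle
= \langle r_i,r_j\rangle
+ a\langle x,r_j\rangle - b\langle y,r_j\rangle
+ c\langle r_i,x\rangle - d\langle r_i,y\rangle
+ \big(ac\langle x,x\rangle - ad\langle x,y\rangle - bc\langle y,x\rangle + bd\langle y,y\rangle\big).
\]
The parenthesized block vanishes identically by the hypothesis $\langle x,x\rangle = \langle y,y\rangle = \langle x,y\rangle = 0$. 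For the linear block, note $a\langle x,r_j\rangle = ad = \langle r_i,y\rangle\langle r_j,x\rangle$, $b\langle y,r_j\rangle = bc$, $c\langle r_i,x\rangle = bc$, $d\langle r_i,y\rangle = ad$, so the four middle terms are $ad - bc + bc - ad = 0$. Hence $\langle r'_i,r'_j\rangle = \langle r_i,r_j\rangle$ for all $i,j$, i.e.\ $A(x,y)A(x,y)^T = AA^T = -I_n$, and Lemma~\ref{lem:SD}~(i) gives that $C(A,x,y)$ is self-dual.

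I do not expect a genuine obstacle here; the only thing to be careful about is the bookkeeping of which inner product multiplies which, and making sure the cancellation of the linear terms uses nothing beyond bilinearity and symmetry of $\langle\,\cdot\,,\cdot\,\rangle$ over $\ZZ_m$ (it does not — the identity $\langle r_i,y\rangle\langle r_j,x\rangle$ appearing twice with opposite signs is what forces the cancellation, independent of the vanishing hypotheses). It is worth remarking that the three vanishing conditions are exactly what is needed to kill the quadratic block, so the hypothesis is tight for this argument; the linear block cancels for free. This completes the proof.
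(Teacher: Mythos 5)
Your proposal is correct and follows essentially the same route as the paper: expand $\langle r'_i,r'_j\rangle$ bilinearly, kill the quadratic block using $\langle x,x\rangle=\langle y,y\rangle=\langle x,y\rangle=0$, observe that the linear terms cancel by symmetry of the inner product, and conclude $A(x,y)A(x,y)^T=AA^T=-I_n$ so that Lemma~\ref{lem:SD}~(i) applies. The only difference is cosmetic (your shorthand $a,b,c,d$ and the explicit remark that self-duality of $C$ forces $AA^T=-I_n$, which the paper leaves implicit).
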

\begin{proof}
Since 
$
\langle x,x \rangle=
\langle y,y \rangle=
\langle x,y \rangle=0$,
we have
\begin{equation}\label{eq:inner}
\begin{split}
\langle r'_i,r'_j \rangle
=&
\langle r_i,r_j \rangle
+ \langle r_j,y \rangle \langle r_i,x \rangle
- \langle r_j,x \rangle \langle r_i,y \rangle
\\&
+ \langle r_i,y \rangle \langle x,r_j \rangle
+ \langle r_i,y \rangle \langle r_j,y \rangle \langle x,x \rangle
- \langle r_i,y \rangle \langle r_j,x \rangle \langle x,y \rangle
\\&
- \langle r_i,x \rangle \langle y,r_j \rangle
- \langle r_i,x \rangle \langle r_j,y \rangle \langle y,x \rangle
+ \langle r_i,x \rangle \langle r_j,x \rangle \langle y,y \rangle
\\
=&
\langle r_i,r_j \rangle
+ \langle r_j,y \rangle \langle r_i,x \rangle
- \langle r_j,x \rangle \langle r_i,y \rangle
+ \langle r_i,y \rangle \langle x,r_j \rangle
\\&
- \langle r_i,x \rangle \langle y,r_j \rangle
\\
=&\langle r_i,r_j \rangle.
\end{split}
\end{equation}
This implies that $AA^T=A(x,y)A(x,y)^T$.
By Lemma~\ref{lem:SD} (i), $C(A,x,y)$ is self-dual.
\end{proof}

\begin{rem}
For self-dual codes over finite fields, the corresponding result can be found in~\cite{IS}.
\end{rem}

\begin{rem}
We modify the above theorem as follows.
Let $C$ be a $\ZZ_m$-code of length $n+n'$ having generator matrix 
$\left( \begin{array}{cc}
I_n & A
\end{array}\right)$,
where $A$ is an $n \times n'$ matrix.
Then the code $C(A,x,y)$ having generator matrix
$\left( \begin{array}{cc}
I_n & A(x,y)
\end{array}\right)$
defined in Theorem~\ref{thm:main} satisfies 
$AA^T=A(x,y)A(x,y)^T$.
\end{rem}

%We apply the above method to Type~II $\ZZ_{2k}$-codes.
We consider a method for constructing Type~II $\ZZ_{2k}$-codes.

\begin{thm}\label{thm:II}
Suppose that $C$ is a Type~II $\ZZ_{2k}$-code of length $2n$ having
generator matrix 
$\left( \begin{array}{cc}
I_n & A
\end{array}\right)$.
Suppose that
$x$ and $y$ are vectors of $\ZZ_{2k}^{n}$ satisfying the condition
$\wt_E(x) \equiv \wt_E(y) \equiv 0 \pmod{4k}$ and 
$\langle x,y \rangle=0$.
Then the code $C(A,x,y)$ having generator matrix
$\left( \begin{array}{cc}
I_n & A(x,y)
\end{array}\right)$
defined in Theorem~\ref{thm:main}
is a Type~II $\ZZ_{2k}$-code of length $2n$.
\end{thm}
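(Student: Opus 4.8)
The plan is to get self-duality of $C(A,x,y)$ for free from Theorem~\ref{thm:main} and then to upgrade to the Type~II property row by row via Lemma~\ref{lem:SD}~(ii). For the first step I would record the congruence $\wt_E(v)\equiv\langle v,v\rangle\pmod{2k}$, valid for every $v\in\ZZ_{2k}^n$: since $\wt_E(v)=\sum_j\rho(v_j)^2$ and $\rho(v_j)\equiv v_j\pmod{2k}$, each summand is congruent to $v_j^2$ modulo $2k$. Hence the hypothesis $\wt_E(x)\equiv\wt_E(y)\equiv0\pmod{4k}$ forces $\langle x,x\rangle=\langle y,y\rangle=0$ in $\ZZ_{2k}$, and together with the assumed $\langle x,y\rangle=0$ these are exactly the hypotheses of Theorem~\ref{thm:main}. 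So $C(A,x,y)$ is a self-dual $\ZZ_{2k}$-code of length $2n$.

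It then remains, by Lemma~\ref{lem:SD}~(ii), to check that every row of $(I_n \mid A(x,y))$ has Euclidean weight divisible by $4k$. The $i$-th such row has Euclidean weight $1+\wt_E(r'_i)$ with $r'_i=r_i+\langle r_i,y\rangle x-\langle r_i,x\rangle y$, whereas the $i$-th row of $(I_n \mid A)$ is a codeword of the Type~II code $C$ and so has weight $1+\wt_E(r_i)\equiv0\pmod{4k}$. Thus the theorem reduces to the single congruence $\wt_E(r'_i)\equiv\wt_E(r_i)\pmod{4k}$ for every $i$.

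To prove this I would first establish a ``polarization modulo $4k$'': for all $u,v\in\ZZ_{2k}^n$ and every integer $c$,
\begin{align*}
\wt_E(u+v)&\equiv\wt_E(u)+\wt_E(v)+2\langle u,v\rangle\pmod{4k},\\
\wt_E(cv)&\equiv c^2\,\wt_E(v)\pmod{4k},
\end{align*}
where $2\langle u,v\rangle$ is read through the well-defined map $\ZZ_{2k}\to\ZZ_{4k}$, $t\mapsto 2t$. Both identities are coordinatewise computations: $\rho(u_j+v_j)-\rho(u_j)-\rho(v_j)$ and $\rho(cv_j)-c\,\rho(v_j)$ are multiples of $2k$, so upon squaring the cross terms carry a factor $4k$ or $4k^2$ and disappear modulo $4k$; summing over $j$ and using $\sum_j\rho(u_j)\rho(v_j)\equiv\langle u,v\rangle\pmod{2k}$ yields the two formulas. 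This bookkeeping with the lift $\rho$ is the only genuinely delicate point, and I expect it to be the main obstacle; the rest is formal.

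With the polarization in hand, put $w=\langle r_i,y\rangle x-\langle r_i,x\rangle y$, so $r'_i=r_i+w$. Applying the two identities to $w=\langle r_i,y\rangle x+(-\langle r_i,x\rangle y)$, the scaled terms $\wt_E(\langle r_i,y\rangle x)$ and $\wt_E(\langle r_i,x\rangle y)$ vanish modulo $4k$ because $\wt_E(x)\equiv\wt_E(y)\equiv0\pmod{4k}$, and the cross term is a multiple of $2\langle x,y\rangle=0$; hence $\wt_E(w)\equiv0\pmod{4k}$. Moreover $\langle r_i,w\rangle=\langle r_i,y\rangle\langle r_i,x\rangle-\langle r_i,x\rangle\langle r_i,y\rangle=0$ in $\ZZ_{2k}$. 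Therefore
$$
\wt_E(r'_i)=\wt_E(r_i+w)\equiv\wt_E(r_i)+\wt_E(w)+2\langle r_i,w\rangle\equiv\wt_E(r_i)\pmod{4k},
$$
which is the needed congruence, and Lemma~\ref{lem:SD}~(ii) then gives that $C(A,x,y)$ is Type~II.
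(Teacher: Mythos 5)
Your proof is correct and follows essentially the same route as the paper: self-duality via Theorem~\ref{thm:main} after observing $\langle x,x\rangle=\langle y,y\rangle=0$, then Lemma~\ref{lem:SD}~(ii) combined with the congruence $\wt_E(r'_i)\equiv\wt_E(r_i)\pmod{4k}$ for each row. The only difference is that you prove the mod-$4k$ scaling and polarization identities directly from the definition of $\rho$, whereas the paper gets the same bookkeeping by introducing $v*v$ and citing \cite[Lemma~2.1]{BDHO}, leaving the expansion implicit.
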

% Let $C$ be a Type~II $\ZZ_{2k}$-code of length $2n$.
% Suppose that $C$ has generator matrix 
% $\left( \begin{array}{cc}
% I_n & A
% \end{array}\right)$.
% Let $r_i$ be the $i$-th row of $A$.
% Let $x$ and $y$ be vectors of $\ZZ_{2k}^{n}$.
% Define an $n \times n$ 
% matrix $A(x,y)$, where the $i$-th row $r'_i$ is defined as follows
% \[
% r'_i = r_i + \langle r_i,y \rangle x- \langle r_i,x \rangle y.
% \]
% Let $C(A,x,y)$ be the $\ZZ_{2k}$-code with the following generator matrix: 
% \[
% \left( \begin{array}{cc}
% I_n & A(x,y)
% \end{array}\right).
% \]
% If
% $\wt_E(x) \equiv \wt_E(y) \equiv 0 \pmod{4k}$ and 
% $\langle x,y \rangle=0$,
% then $C(A,x,y)$ is a Type~II $\ZZ_{2k}$-code of length $2n$.
\begin{proof}
Let $\rho$ be the map given in~\eqref{eq:rho}.
For a vector $v=(v_1,v_2,\ldots,v_n) \in \ZZ_{2k}$,
define the positive integer
$v*v=\sum_{i=1}^{2k-1} i^2 n_i(v)$, where
$n_i(v)=|\{j \in \{1,2,\ldots,n\}\mid
\rho(v_j) \equiv i \pmod{2k} \}|$ $(i=0,1,\ldots,2k-1)$.
By~\cite[Lemma~2.1]{BDHO},
$\wt_E(v) \equiv v*v \pmod{4k}$.
This implies that $\langle x,x \rangle=\langle y,y \rangle=0$.
Thus, by Theorem~\ref{thm:main}, $C(A,x,y)$ is self-dual.
Let $r_i$ and $r'_i$ denote the $i$-th row of $A$ and $A(x,y)$, respectively.
By Lemma~\ref{lem:SD} (ii), 
% by~\cite[Lemma~2.2]{BDHO},
it is sufficient to show that $\wt_E(r'_i) \equiv 4k-1 
\pmod{4k}$ $(i=1,2,\ldots,n)$.
From~\eqref{eq:inner}, we have
\begin{align*}
\wt_E(r'_i)&\equiv
r'_i * r'_i 
\\&\equiv
r_i * r_i 
+ x*x+ y*y
\\&\equiv
\wt_E(r_i)
+\wt_E(x)
+\wt_E(y) 
\\&
\equiv
\wt_E(r_i) 
\\&\equiv 4k-1 \pmod{4k}.
\end{align*}
The result follows.
\end{proof}

\begin{rem}
For the case $k=1$, the above theorem can be found in~\cite{IS}.
Theorem~5 in~\cite{H98} can be obtained from the above theorem
by putting $k=2$, $x=t$ and $y=(2,2,\ldots,2)$.
\end{rem}

%%%%%%%%%%%%%%%%%%%%
\section{Extremal Type~II $\ZZ_{2k}$-codes of length 24}
\label{sec:24}

For $k=4,5,\ldots,20$,
in this section, we construct extremal Type~II $\ZZ_{2k}$-codes
of length $24$,
by four-negacirculant codes and by Theorem~\ref{thm:II}.
% The aim of this section is to establish Theorem~\ref{thm:24}.

\subsection{Motivation and results}
For lengths $8$ and $16$, 
every Type~II $\ZZ_{2k}$-code is extremal.
In other words, 
the smallest possible length for which there is a nontrivial extremal Type~II $\ZZ_{2k}$-code is $24$.
Meanwhile, an extremal Type~II $\ZZ_{2k}$-codes of length $24$ are related to the Leech lattice, which is arguably the most remarkable lattice.
In particular, by considering the existence of $2k$-frames in the Leech
lattice, it was shown that there is an extremal Type~II $\ZZ_{2k}$-code
of length $24$ for every positive integer $k$ with
$k \ge 2$~\cite{Chapman} and~\cite{GHZ22}.

There are nine inequivalent binary doubly even self-dual codes
of length $24$~\cite{PS75}.
The seven codes are indecomposable and these code
are denoted by $\textrm{A}24, \textrm{B}24, \ldots, \textrm{G}24$
in~\cite[Table~II]{PS75}.
Only the code $\textrm{G}24$ has minimum weight $8$, that is, 
$\textrm{G}24$ is extremal, and it is well known as
the extended Golay code.
As usual we denote the two decomposable codes by $e_8^3$ and $d_{16}e_8$.
For every binary doubly even self-dual code $B$ of length $24$, 
there is an extremal Type~II $\ZZ_{2k}$-code $C$ 
such that $C^{(2)} \cong B$ when 
$k=2$~\cite[Proposition~9]{H98} (see~\cite[Postscript]{CS97})
and $k=3$~\cite[Theorem~3]{HK02}.
% For $k=4,5,\ldots,10$, we establish a result similar to the above results
% by constructing four-negacirculant codes and by Theorem~\ref{thm:II}.
In this section, we establish the following theorem
by constructing extremal Type~II $\ZZ_{2k}$-codes of length $24$
for $k=4,5,\ldots,20$
by using four-negacirculant codes and by Theorem~\ref{thm:II} explicitly.

\begin{thm}\label{thm:24}
Suppose that $k \in \{2,3,\ldots,20\}$.
For every binary doubly even self-dual code $B$ of length $24$,
there is an extremal Type~II $\ZZ_{2k}$-code $C$
such that $C^{(2)} \cong B$.
\end{thm}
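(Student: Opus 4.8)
The plan is to prove Theorem~\ref{thm:24} by combining two ingredients: a concrete starting code for each $k$, obtained as a four-negacirculant code, together with the transformation technique of Theorem~\ref{thm:II}, which preserves the Type~II property while potentially moving around the binary part. The logical skeleton is as follows. For each $k \in \{4,5,\ldots,20\}$, first exhibit at least one extremal Type~II $\ZZ_{2k}$-code $C_0$ of length $24$ with generator matrix of the form $\left(\begin{array}{cc} I_{12} & A \end{array}\right)$; this can be done by a computer search over four-negacirculant codes, i.e.\ by searching for negacirculant matrices $A,B$ of order $6$ with $AA^T + BB^T = -I_6$ and $\wt_E(r_A)+\wt_E(r_B)\equiv -1 \pmod{4k}$ (so that Lemma~\ref{lem:SD} guarantees self-duality and Type~II), and then verifying extremality via the minimum norm of the Construction~A lattice $A_{2k}(C_0)$ as in~\eqref{eq:A}. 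Since a four-negacirculant generator matrix is already in the standard form with $I_{2n}$ on the left, Theorem~\ref{thm:II} applies directly.

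Next, for each such starting code $C_0$ one iterates the operation $C \mapsto C(A,x,y)$ over many choices of vectors $x,y \in \ZZ_{2k}^{12}$ satisfying $\wt_E(x)\equiv\wt_E(y)\equiv 0 \pmod{4k}$ and $\langle x,y\rangle = 0$. By Theorem~\ref{thm:II} every code so produced is again an extremal Type~II $\ZZ_{2k}$-code of length $24$ (extremality must still be rechecked for each, since Theorem~\ref{thm:II} only preserves the Type~II property, not the minimum weight — though in practice one expects the minimum Euclidean weight rarely to drop below the extremal bound, and this is verified computationally). For each code in the orbit one computes its binary part $C^{(2)}$, which by the cited results of~\cite{DGH} and~\cite{DHS} is a binary doubly even self-dual code of length $24$, and identifies it up to equivalence with one of the nine codes $\mathrm{A}24,\ldots,\mathrm{G}24,e_8^3,d_{16}e_8$ from~\cite{PS75}. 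The goal is to show that as $x,y$ range over admissible choices, all nine equivalence classes are realized for each $k$. Finally, for $k \in \{2,3\}$ the statement is already known, by~\cite[Proposition~9]{H98} (with the postscript~\cite[Postscript]{CS97}) and~\cite[Theorem~3]{HK02} respectively, so those cases are simply cited.

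The main obstacle is not conceptual but a matter of making the computational search succeed uniformly over all seventeen values of $k$ and all nine binary codes: one needs, for each pair $(k,B)$, to actually land on a code whose binary part is equivalent to $B$. It is conceivable that the orbit of a single starting four-negacirculant code under the $C(A,x,y)$ construction does not hit every binary type; to guard against this one would search over several starting codes, or chain the construction (apply Theorem~\ref{thm:II} repeatedly, since the output again has a generator matrix in standard form so the hypotheses are re-met). The equivalence test $C^{(2)} \cong B$ for binary codes of length $24$ is routine in {\sc Magma}, and Lemma~\ref{lem:equiv} assures us that keeping track of the binary part up to equivalence is well defined. Verifying extremality of each produced code reduces, via Construction~A, to checking that the even unimodular lattice $A_{2k}(C)$ has minimum norm equal to $\min\{d_E(C)/2k, 2k\}$ with $d_E(C) = 4k\lfloor 24/24\rfloor + 4k = 8k$, i.e.\ minimum norm $\min\{4,2k\} = 4$ for all $k \ge 2$; this too is a finite check. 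The proof therefore consists of assembling these explicit computations — presenting representative generator data $(r_A, r_B)$ and the transforming vectors $(x,y)$ for each $(k,B)$ in a table — and invoking Theorems~\ref{thm:main} and~\ref{thm:II} to certify correctness.
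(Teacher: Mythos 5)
Your proposal matches the paper's proof: the cases $k=2,3$ are cited from \cite{H98}, \cite{CS97} and \cite{HK02}, and for $k=4,5,\ldots,20$ the paper constructs four-negacirculant extremal Type~II $\ZZ_{2k}$-codes, applies Theorem~\ref{thm:II} with explicit vectors $x,y$ to obtain further codes, verifies extremality through the minimum norm of $A_{2k}(C)$ (Lemma~\ref{lem:min}), and records in Table~\ref{Tab:Fig} which code has binary part equivalent to each of the nine binary doubly even self-dual codes. This is essentially the same computational-certification strategy you describe, including the caveat that extremality must be rechecked after each application of Theorem~\ref{thm:II}.
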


Although the following lemma is trivial, we give a proof for completeness.

\begin{lem}\label{lem:min}
Suppose that $k \ge 3$ and $n \in\{24,32,40\}$.
Let $C$ be a Type~II $\ZZ_{2k}$-code of length $n$.
Let $A_{2k}(C)$ denote the even unimodular lattice given in~\eqref{eq:A}.
Then $C$ is extremal if and only if $A_{2k}(C)$ has minimum norm $4$.
\end{lem}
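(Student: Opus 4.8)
The plan is to deduce the equivalence directly from the minimum-norm formula for Construction~A lattices recalled just before the statement, namely that $A_{2k}(C)$ is an even unimodular lattice with minimum norm $\min\{d_E(C)/2k,\,2k\}$. The only preparatory step is to make precise what ``extremal'' means at these lengths, which amounts to checking that the bound~\eqref{eq:B} is genuinely available.

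First I would observe that $\lfloor n/24\rfloor=1$ for every $n\in\{24,32,40\}$, so the hypothesis $\lfloor n/24\rfloor\le k-2$ of~\cite{BDHO} holds as soon as $k\ge3$; hence the bound~\eqref{eq:B} applies and reads $d_E(C)\le 8k$, and $C$ is extremal precisely when $d_E(C)=8k$. Next, since $C$ is Type~II its minimum Euclidean weight $d_E(C)$ is a positive multiple of $4k$, so $d_E(C)/2k$ is a positive even integer, and the bound gives $d_E(C)/2k\le 4<2k$ because $k\ge3$. Consequently $\min\{d_E(C)/2k,\,2k\}=d_E(C)/2k$, i.e.\ the minimum norm of $A_{2k}(C)$ equals $d_E(C)/2k$.

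Combining these two observations finishes the proof: the minimum norm of $A_{2k}(C)$ is $4$ if and only if $d_E(C)/2k=4$, i.e.\ $d_E(C)=8k$, i.e.\ $C$ is extremal. I do not anticipate any real obstacle here, as the whole argument is bookkeeping around the known minimum-norm formula. The only two points needing a moment's attention are that~\eqref{eq:B} must be checked to apply (which is exactly why the hypothesis $k\ge3$, equivalently $\lfloor n/24\rfloor\le k-2$, is imposed), and that the term $2k$ inside the minimum never interferes once $k\ge3$, so that a minimum norm of $4$ pins down $d_E(C)$ uniquely rather than merely bounding it.
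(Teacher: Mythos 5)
Your proof is correct. It differs mildly but genuinely from the paper's argument in which facts it leans on: you take the minimum-norm formula $\min(A_{2k}(C))=\min\{d_E(C)/2k,\,2k\}$ from~\cite{BDHO} as the black box, and you get the inequality $d_E(C)/2k\le 4$ from the code-theoretic bound~\eqref{eq:B}, whose applicability you justify via $\lfloor n/24\rfloor=1\le k-2$ for $k\ge 3$. The paper instead re-derives by hand the only piece of that formula it needs --- that for $k\ge 3$ the norm-$2$ vectors of $A_{2k}(C)$ are exactly the images $\frac{1}{\sqrt{2k}}\rho(c)$ of codewords $c$ with $\wt_E(c)=4k$ --- and then gets the upper bound $\min(A_{2k}(C))\le 4$ from the lattice-theoretic bound $\min(L)\le 2\lfloor n/24\rfloor+2$ for even unimodular lattices (Conway--Sloane), rather than from~\eqref{eq:B}. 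The two routes are equally short; yours is a cleaner bookkeeping exercise once the $\min\{d_E/2k,2k\}$ formula is accepted, while the paper's is slightly more self-contained on the coding side, invoking the lattice bound (which is also the mechanism behind~\eqref{eq:B} in~\cite{BDHO}) and only needing~\eqref{eq:B} implicitly through the definition of extremality, i.e.\ to identify ``extremal'' with $d_E(C)=8k$. In both arguments the hypothesis $k\ge 3$ plays the same role, preventing the $2k$ term (equivalently, the vectors coming from $2k\ZZ^n$) from interfering, and you handle the two points that needed care --- the availability of~\eqref{eq:B} and the non-interference of the $2k$ term --- explicitly and correctly.
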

\begin{proof}
% Note that $A_{2k}(C)$ contains a $2k$-frame, that is, some vectors of norm $2k$.
Since $k \ge 3$,
a vector of norm $2$ of $A_{2k}(C)$ is written as
\[
\frac{1}{\sqrt{2k}}
(\rho (c_{1}), \rho (c_{2}), \ldots , \rho (c_{n}))
\]
for some codeword $c=(c_{1}, c_2,\ldots , c_{n})  \in C$ such that
$\wt_E(c)=4k$,
where $\rho$ is the map given in~\eqref{eq:rho}.
Thus, 
%$A_{2k}(C)$ contains a vector of norm $2$ if and only if $C$ contains
%a codeword of Euclidean weight $4k$.
%In other words, 
$C$ contains no codeword of Euclidean weight $4k$
if and only if $A_{2k}(C)$ contains no vector of norm $2$.
It is known that the minimum norm $\min(L)$ of an even unimodular lattice
in dimension $n$
is bounded by $\min(L) \le 2 \lfloor \frac{n}{24} \rfloor +2$
(see~\cite[Chapter~7]{SPLAG}).
Thus, $\min(A_{2k}(C)) \le 4$ for $n \in\{24,32,40\}$.
Hence, 
$C$ contains no codeword of Euclidean weight $4k$
if and only if $A_{2k}(C)$ has minimum norm $4$.
The result follows.
\end{proof}

By the above lemma, 
we determined the extremality for all Type~II $\ZZ_{2k}$-codes
found in this section and
the next section.

%\begin{itemize}
\subsection{Extremal Type~II $\ZZ_8$-codes of length 24}
%\item{Extremal Type~II $\ZZ_8$-codes of length 24}

By considering four-negacirculant codes, our computer search found
extremal Type~II $\ZZ_8$-codes $C_{8,24,1}$ and $C_{8,24,2}$ of length $24$
(see the end of Section~\ref{sec:2} for four-negacirculant codes).
The codes $C_{8,24,1}$ and $C_{8,24,2}$ have generator matrices
of form~\eqref{eq:4},
where the first rows $r_A$ and $r_B$ of the negacirculant matrices $A$ and $B$
are as follows
\begin{align*}
  (r_A, r_B)=&((5,3,3,3,3,1), (5,5,5,6,7,7)) \text{ and }\\
  &((0,0,0,1,1,3), (0,1,0,3,3,1)),  
\end{align*}
respectively.
By Theorem~\ref{thm:II}, our computer search found more extremal Type~II
$\ZZ_{8}$-codes $C_{8,24,3},C_{8,24,4},\ldots,C_{8,24,8}$  and
$C_{8,24,9}$ as $C(A,x,y)$ from 
$M_{8,24,i}$ $(i=1,2,4)$, where $M_{8,24,i}$
denotes the right half of the generator matrix of
$C_{8,24,i}$.
In Table~\ref{Tab:Z8}, we list 
the matrices $A$ and the vectors $x,y$ in Theorem~\ref{thm:II}.
%The binary parts of $C_{8,24,i}$ $(i=1,2,\ldots,9)$ are listed in Table~\ref{Tab:Fig}.

%%%%%%%%%%%%%%%%%%%%%%%%%%%%%%%%%%%%%%%%%%%%%%%%
\begin{table}[thb]
\caption{Extremal Type~II $\ZZ_8$-codes of length $24$}
\label{Tab:Z8}
\centering
\medskip
%{\small
{\footnotesize
%{\scriptsize
\begin{tabular}{c|c|cc}
\noalign{\hrule height0.8pt}
Codes  & $A$ & $x$&$y$ \\
\hline
$C_{8,24,3}$&$M_{8,24,1}$&$(1,1,1,4,7,3,3,0,2,5,5,6)$&$(0,0,0,1,1,2,4,6,0,7,2,1)$\\
$C_{8,24,4}$&$M_{8,24,1}$&$(5,4,1,4,1,1,3,1,3,0,7,4)$&$(0,1,1,1,7,4,6,6,1,3,3,7)$\\
$C_{8,24,5}$&$M_{8,24,1}$&$(6,5,1,4,5,5,6,6,4,6,0,2)$&$(0,0,0,5,2,2,2,5,6,2,3,1)$\\
$C_{8,24,6}$&$M_{8,24,2}$&$(3,3,3,6,1,0,3,5,1,2,5,4)$&$(0,0,0,3,1,6,4,5,2,1,4,2)$\\
$C_{8,24,7}$&$M_{8,24,2}$&$(4,3,7,0,0,0,6,2,6,0,1,5)$&$(0,1,6,1,2,4,5,6,0,0,0,5)$\\
$C_{8,24,8}$&$M_{8,24,2}$&$(4,1,1,7,7,5,2,3,0,3,5,6)$&$(0,0,2,3,0,2,2,0,3,3,0,3)$\\
$C_{8,24,9}$&$M_{8,24,4}$&$(5,2,2,6,6,2,6,1,5,2,1,0)$&$(0,0,1,6,5,0,6,0,1,0,3,2)$\\
\noalign{\hrule height0.8pt}
\end{tabular}
}
\end{table}

%%%%%%%%%%%%%%%%%%%%%%%%%%%%%%%%%%%%%%%%%%
\subsection{Extremal Type~II $\ZZ_{2k}$-codes of length 24
for $k=5,6,\ldots,20$}

For $\ZZ_{2k}$-codes $(k=5,6,\ldots,20)$, 
to save space, we list the results only.
Note that the approach is similar to that in the previous subsection.

For $k =5,6,\ldots,20$,
by considering four-negacirculant codes, our computer search found
extremal Type~II $\ZZ_{2k}$-codes 
$C_{2k,24,i}$ of length $24$ $(i=1,2,\ldots,i_k)$,
where 
$i_k=3$ if $k \in \{5,6,\ldots,10\}$ and
$i_k=4$ if $k \in \{11,12,\ldots,20\}$.
The codes  have generator matrices
of form~\eqref{eq:4},
where the first rows $r_A$ and $r_B$ of the negacirculant matrices $A$ and $B$
are listed in Table~\ref{Tab:24-2}.
By Theorem~\ref{thm:II}, our computer search found more extremal Type~II
$\ZZ_{2k}$-codes $C_{2k,24,j}$  $(j=i_k+1,i_k+2,\ldots,9)$ 
as $C(A,x,y)$ from $M_{2k,24,i}$, where $M_{2k,24,i}$
denotes the right half of the generator matrix of
$C_{2k,24,i}$
$(i=1,2,\ldots,9)$.
In Tables~\ref{Tab:24-xy1}, \ref{Tab:24-xy2} and~\ref{Tab:24-xy3}, we list 
the matrices $A$ and the vectors $x,y$ in Theorem~\ref{thm:II}.

\subsection{Binary parts of $C_{2k,24,1},C_{2k,24,2}, \ldots, C_{2k,24,9}$
for $k=4,5,\ldots,20$}

For a given binary doubly even self-dual code $B$ of length $24$
and a given $k$, 
we list in Table~\ref{Tab:Fig} the number $i$ such that
$C^{(2)}_{2k,24,i} \cong B$. 
From Table~\ref{Tab:Fig}, 
we have Theorem~\ref{thm:24}
combining with known results in~\cite{H98} and~\cite{HK02}.

\begin{rem}
\begin{itemize}
\item[(i)]
For $B= e_8^3 $ and $\textrm{G}24$,
an extremal Type~II $\ZZ_8$-code $C$ of length $24$
such that $C^{(2)} \cong B$
can be found in~\cite{GHK} and~\cite{GH}, respectively.
\item[(ii)]
Extremal Type~II $\ZZ_{10}$-codes $C$ of length $24$ such that
$C^{(2)} \cong e_8^3$
can be found in~\cite{GHK}.
\item[(iii)]
For $B= \textrm{E}24$ and $e_8^3$, 
an extremal Type~II $\ZZ_{12}$-code $C$ of length $24$ such that
$C^{(2)} \cong B$
can be found in~\cite{GHK}.
\item[(iv)]
An extremal Type~II $\ZZ_{14}$-code $C$ of length $24$ such that
$C^{(2)} \cong e_8^3$ can be found in~\cite{GHK}.
\item[(v)]
An extremal Type~II $\ZZ_{16}$-code $C$ of length $24$ such that
$C^{(2)} \cong \textrm{E}24$ can be found in~\cite{GHK}.
\item[(vi)]
An extremal Type~II $\ZZ_{22}$-code $C$ of length $24$ such that
$C^{(2)} \cong \textrm{E}24$ can be found in~\cite{GHZ22}.
For $B= \textrm{E}24$ and $\textrm{G}24$,
an extremal Type~II $\ZZ_{22}$-code $C$ of length $24$ such that
$C^{(2)} \cong B$ can be found in~\cite{GHK}.
\end{itemize}
\end{rem}

In addition, by Lemma~\ref{lem:equiv}, we have the following corollary.

\begin{cor}
Suppose that $k \in \{2,3,\ldots,20\}$.
Then there are at least nine inequivalent extremal Type~II $\ZZ_{2k}$-codes of length $24$.
\end{cor}

\begin{table}[thb]
\caption{Binary parts of $C_{2k,24,i}$ $(k=4,5,\ldots,20)$}
\label{Tab:Fig}
\centering
\medskip
{\small
%{\footnotesize
%{\scriptsize
\begin{tabular}{c|ccccccccc}
\noalign{\hrule height0.8pt}
$k$ & $\textrm{A}24$& $\textrm{B}24$& $\textrm{C}24$& $\textrm{D}24$&
$\textrm{E}24$& $\textrm{F}24$& $\textrm{G}24$& $e_8^3$& $d_{16}e_8$\\
\hline
 4 & 3& 8& 5& 6& 1& 2& 7& 4& 9 \\
 5 & 3& 6& 5& 4& 2& 9& 7& 1& 8 \\
 6 & 3& 5& 6& 4& 2& 7& 9& 1& 8 \\
 7 & 3& 4& 7& 6& 2& 8& 9& 1& 5 \\
 8 & 4& 7& 6& 5& 2& 3& 8& 1& 9 \\
 9 & 3& 5& 4& 8& 2& 7& 9& 1& 6 \\
10 & 3& 5& 7& 4& 2& 8& 9& 1& 6 \\
11 & 3& 5& 8& 7& 2& 4& 9& 1& 6 \\
12 & 2& 8& 6& 7& 3& 4& 9& 1& 5 \\
13 & 4& 5& 8& 6& 3& 2& 9& 1& 7 \\
14 & 2& 7& 9& 5& 4& 3& 8& 1& 6 \\
15 & 2& 5& 7& 6& 4& 3& 9& 1& 8 \\
16 & 2& 5& 7& 8& 4& 3& 9& 1& 6 \\
17 & 2& 7& 5& 8& 3& 4& 9& 1& 6 \\
18 & 2& 5& 8& 6& 3& 4& 9& 1& 7 \\
19 & 2& 6& 5& 8& 4& 3& 9& 1& 7 \\
20 & 2& 7& 5& 8& 4& 3& 9& 1& 6 \\
\noalign{\hrule height0.8pt}
\end{tabular}
}
\end{table}

We end this section with the following question.

\begin{Q}
Suppose that $k \ge 21$.
For every binary doubly even self-dual code $B$ of length $24$,
is there an extremal Type~II $\ZZ_{2k}$-code $C$
such that $C^{(2)} \cong B$.
\end{Q}

%%%%%%%%%%%%%%%%%%%%
\section{Extremal Type~II $\ZZ_{2k}$-codes of length 32}
\label{sec:32}

For $k=4,5,\ldots,10$,
in this section, we construct extremal Type~II $\ZZ_{2k}$-codes
of length $32$,
by four-negacirculant codes and by Theorem~\ref{thm:II}.
% The aim of this section is to establish Theorem~\ref{thm:32}.

There are five inequivalent binary extremal doubly even self-dual codes
of length $32$~\cite{CPS}.
The five codes are denoted by $\textrm{C}81, \textrm{C}82, \ldots, \textrm{C}85$
in~\cite[Table~A]{CPS}.
For every binary extremal doubly even self-dual code $B$ of length $32$, 
there is an extremal Type~II $\ZZ_{2k}$-code $C$ 
such that $C^{(2)} \cong B$ when 
$k=2$~\cite[Theorem~5]{GH32} and $k=3$~\cite[Proposition~3]{HZ6}.
% For $k=4,5,\ldots,10$, we establish a result similar to the above results
% by constructing four-negacirculant codes and by Theorem~\ref{thm:II}.

In this section, 
by constructing extremal Type~II $\ZZ_{2k}$-codes of length $32$
($k=4,5,\ldots,10$)
by using four-negacirculant codes and by Theorem~\ref{thm:II},
we explicitly establish the following theorem.

\begin{thm}\label{thm:32}
Suppose that $k \in \{2,3,\ldots,10\}$.
For every binary extremal doubly even self-dual code $B$ of length $32$,
there is an extremal Type~II $\ZZ_{2k}$-code $C$
such that $C^{(2)} \cong B$.
\end{thm}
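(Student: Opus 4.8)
The plan is to follow the proof of Theorem~\ref{thm:24} almost verbatim. For $k=2$ and $k=3$ the statement is already known: it is \cite[Theorem~5]{GH32} for $k=2$ and \cite[Proposition~3]{HZ6} for $k=3$. So it remains to treat $k\in\{4,5,\ldots,10\}$, and for these values I would argue by explicit construction. Note first that for length $32$ the bound~\eqref{eq:B} holds whenever $\lfloor 32/24\rfloor\le k-2$, i.e.\ for all $k\ge 3$, so ``extremal'' is meaningful here; and since $\lfloor 32/24\rfloor=1$, an extremal Type~II $\ZZ_{2k}$-code of length $32$ has minimum Euclidean weight $4k\cdot 1+4k=8k$. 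Throughout, four-negacirculant codes of length $32$ use the form~\eqref{eq:4} with $n=8$, so the $I_{2n}=I_{16}$ block sits next to a $16\times 16$ right block built from $8\times 8$ negacirculant matrices $A,B$.

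First I would run a computer search over four-negacirculant $\ZZ_{2k}$-codes of length $32$, keeping those first rows $(r_A,r_B)$ for which $AA^T+BB^T=-I_8$ (self-duality) and $\wt_E(r_A)+\wt_E(r_B)\equiv -1\pmod{4k}$ (Type~II), as described at the end of Section~\ref{sec:2}. Extremality of such a code $C$ is then decided by Lemma~\ref{lem:min}: one checks that the even unimodular lattice $A_{2k}(C)$ of~\eqref{eq:A} has minimum norm $4$. This should yield, for each $k\in\{4,\ldots,10\}$, a few extremal Type~II $\ZZ_{2k}$-codes $C_{2k,32,1},C_{2k,32,2},\ldots$ of length $32$.

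Next I would enlarge this supply using Theorem~\ref{thm:II}: taking $A$ to be the right half $M_{2k,32,i}$ of the generator matrix of an already found code, and choosing vectors $x,y\in\ZZ_{2k}^{16}$ with $\wt_E(x)\equiv\wt_E(y)\equiv 0\pmod{4k}$ and $\langle x,y\rangle=0$, the code $C(A,x,y)$ is again a Type~II $\ZZ_{2k}$-code of length $32$ by Theorem~\ref{thm:II}, and its extremality is again checked via Lemma~\ref{lem:min}. The point of this step is that, although $A(x,y)A(x,y)^T=AA^T$, the code $C(A,x,y)$ generally lies in a different equivalence class and can have a different binary part $C(A,x,y)^{(2)}$ from $C$. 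I would iterate this, varying the seed code and the pair $(x,y)$, until for each $k\in\{4,\ldots,10\}$ the list of constructed extremal Type~II $\ZZ_{2k}$-codes realizes, as binary parts, all five binary extremal doubly even self-dual codes $\textrm{C}81,\ldots,\textrm{C}85$ of length $32$. Verifying $C^{(2)}\cong B$ for each of the five $B$ is a finite computation, and by Lemma~\ref{lem:equiv} it depends only on the equivalence class of $C$. The result is then recorded in a table analogous to Table~\ref{Tab:Fig}, giving for each $k$ and each $B$ an index $i$ with $C^{(2)}_{2k,32,i}\cong B$; together with the cited results for $k=2,3$ this establishes the theorem.

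The main obstacle I expect is computational rather than conceptual. Four-negacirculant $\ZZ_{2k}$-codes of length $32$ (with $8\times 8$ negacirculant blocks) form a larger parameter space than in the length-$24$ case, extremality becomes more stringent as $k$ grows, and there is no a priori reason that a four-negacirculant extremal code exists with each prescribed binary part—this is precisely why Theorem~\ref{thm:II} is needed, as a mechanism to move between equivalence classes and binary parts while preserving the Type~II property. If a direct search from the four-negacirculant seeds does not immediately reach all five binary parts for some small $k$, the fallback is to compose several transformations $C\mapsto C(A,x,y)$ and to widen the pool of seed codes until every $\textrm{C}8j$ appears.
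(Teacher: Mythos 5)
Your proposal matches the paper's own proof: known results for $k=2,3$ are cited, and for $k=4,5,\ldots,10$ extremal Type~II $\ZZ_{2k}$-codes of length $32$ are found by computer search over four-negacirculant codes (Table~\ref{Tab:32}), extended via Theorem~\ref{thm:II} (Table~\ref{Tab:32-2}), with extremality checked through Lemma~\ref{lem:min} and the binary parts matched to $\textrm{C}81,\ldots,\textrm{C}85$ as in Table~\ref{Tab:Fig32}. This is essentially the same approach, correctly executed.
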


%%%%%%%%%%%%%%%%%%%%%%%%%%%%%%%%%%%%%%%%%%%%%%%%
\begin{table}[thb]
\caption{Extremal Type~II  four-negacirculant $\ZZ_{2k}$-codes of length $32$ $(k=4,5,\ldots,10)$}
\label{Tab:32}
\centering
\medskip
{\small
%{\footnotesize
%{\scriptsize
\begin{tabular}{c|ll}
\noalign{\hrule height0.8pt}
Codes & \multicolumn{1}{c}{$r_A$}&\multicolumn{1}{c}{$r_B$} \\
\hline
$C_{8,32,1}$&$(7,4,3,6,3,5,6,3)$&$(5,6,0,2,0,4,7,4)$\\
$C_{8,32,2}$&$(7,4,3,6,4,5,2,5)$&$(6,5,2,2,0,7,1,2)$\\
%%%
$C_{10,32,1}$&$(5,1,4,5,1,8,0,4)$&$(4,6,0,0,2,3,1,5)$\\
$C_{10,32,2}$&$(5,1,4,5,4,0,5,0)$&$(6,6,5,2,8,4,9,7)$\\
%%%
$C_{12,32,1}$&$( 2,  5, 11,  3,  1,  7,  6, 10)$&$(  5, 10,  5,  6,  8,  6,  0,  4)$\\
$C_{12,32,2}$&$( 2,  5, 11,  3,  3,  4, 10,  2)$&$(  0,  6, 10,  1,  2,  0,  5,  1)$\\
%%%
$C_{14,32,1}$&$( 3, 11,  6,  1,  3, 10, 13, 12)$&$(  4, 10,  3,  0,  7, 12,  6,  8)$\\
$C_{14,32,2}$&$( 3, 11,  6,  1,  2,  2, 11,  8)$&$( 12,  0,  2, 13,  1, 10,  6,  9)$\\
%%%
$C_{16,32,1}$&$( 2, 13,  4,  2, 11,  2,  0, 15)$&$(  4, 15,  9,  0,  2,  0, 15, 11)$\\
$C_{16,32,2}$&$( 2, 13,  4,  2, 14,  7,  7,  0)$&$(  6, 12,  7,  7,  5,  1,  6, 10)$\\
%%%
$C_{18,32,1}$&
$(14, 17, 15, 11,  1,  3, 14,  2)$&$(12, 10,  3,  4,  5,  6,  4,  4)$\\
$C_{18,32,2}$&
$(14, 17, 15, 11,  5, 16, 16, 14)$&$( 8, 16,  5,  6,  0,  3, 15, 14)$\\
%%%
$C_{20,32,1}$&
$(2,0,8,9,15,5,11,7)$&$(10,0,15,8,5,4,8,16)$\\
$C_{20,32,2}$&
$(2,0,8,9,1,14,12,1)$&$(6,18,0,1,1,19,17,16)$\\
\noalign{\hrule height0.8pt}
\end{tabular}
}
\end{table}
%%%%%%%%%%%%%%%%%%%%%%%%%%%%%%%%%%%%%%%%%%%%%%%%

%%%%%%%%%%%%%%%%%%%%
The approach is similar to that in the previous section.
To save space, we list the results only.
By considering four-negacirculant codes, our computer search found
extremal Type~II $\ZZ_{2k}$-codes 
$C_{2k,32,1}$ and $C_{2k,32,2}$ of length $32$
$(k =4,5,\ldots,10)$.
The codes  have generator matrices
of form~\eqref{eq:4},
where the first rows $r_A$ and $r_B$ of the negacirculant matrices $A$ and $B$
are listed in Table~\ref{Tab:32}.
By Theorem~\ref{thm:II}, our computer search found more extremal Type~II
$\ZZ_{2k}$-codes $C_{2k,32,3}$, $C_{2k,32,4}$ and $C_{2k,32,5}$  
as $C(A,x,y)$ from $M_{2k,32,i}$, where $M_{2k,32,i}$
denotes the right half of the generator matrix of
$C_{2k,32,i}$
$(i=1,2, k =4,5,\ldots,10)$.
In Table~\ref{Tab:32-2}, we list 
the matrices $A$ and the vectors $x,y$ in Theorem~\ref{thm:II}.
For a given binary extremal doubly even self-dual code $B$ of length $32$
and a given $k$, 
we list in Table~\ref{Tab:Fig32} the number $i$ such that
$C^{(2)}_{2k,32,i} \cong B$ $(k =4,5,\ldots,10)$. 
From Table~\ref{Tab:Fig32}, 
we have Theorem~\ref{thm:32}
combining with known results in~\cite{GH32} and~\cite{HZ6}.
In addition, by Lemma~\ref{lem:equiv}, we have the following corollary.

\begin{cor}
Suppose that $k \in \{2,3,\ldots,10\}$.
Then there are at least five inequivalent extremal Type~II $\ZZ_{2k}$-codes of length $32$.
\end{cor}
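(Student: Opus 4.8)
The plan is to deduce this immediately from Theorem~\ref{thm:32} and Lemma~\ref{lem:equiv}, using the known classification of binary extremal doubly even self-dual codes of length $32$. First I would recall from~\cite{CPS} that there are exactly five inequivalent binary extremal doubly even self-dual codes of length $32$, denoted $\textrm{C}81,\textrm{C}82,\ldots,\textrm{C}85$ in~\cite[Table~A]{CPS}. Fix $k \in \{2,3,\ldots,10\}$. By Theorem~\ref{thm:32}, for each $i \in \{1,2,3,4,5\}$ there exists an extremal Type~II $\ZZ_{2k}$-code $C_i$ of length $32$ with $C_i^{(2)} \cong \textrm{C}8i$; concretely, for $k \in \{4,5,\ldots,10\}$ these are the codes $C_{2k,32,i}$ recorded in Tables~\ref{Tab:32}, \ref{Tab:32-2} and~\ref{Tab:Fig32}, and for $k \in \{2,3\}$ they are supplied by~\cite{GH32} and~\cite{HZ6}.

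The second step is to show that $C_1,\ldots,C_5$ are pairwise inequivalent. Suppose, for contradiction, that $C_i \cong C_j$ for some $i \neq j$. By Lemma~\ref{lem:equiv}, this forces $C_i^{(2)} \cong C_j^{(2)}$, hence $\textrm{C}8i \cong \textrm{C}8j$, contradicting the fact that the five codes of~\cite{CPS} are pairwise inequivalent. Therefore $C_1,\ldots,C_5$ are five pairwise inequivalent extremal Type~II $\ZZ_{2k}$-codes of length $32$, which is exactly the assertion.

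I do not anticipate any genuine obstacle: the argument is a one-line contrapositive, with all the real work already done in Theorem~\ref{thm:32} (the explicit constructions) and imported from the literature (the count of five in~\cite{CPS}, the triviality of Lemma~\ref{lem:equiv}). The only point worth a sentence of care is that equivalence of $\ZZ_{2k}$-codes — permutation of coordinates together with sign changes — descends to equivalence of binary parts, which is precisely the content of Lemma~\ref{lem:equiv}; once that is granted, distinctness of the binary parts immediately separates the $\ZZ_{2k}$-codes.
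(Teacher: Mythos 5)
Your argument is correct and is exactly the paper's (implicit) proof: Theorem~\ref{thm:32} supplies, for each of the five pairwise inequivalent codes $\textrm{C}81,\ldots,\textrm{C}85$ of~\cite{CPS}, an extremal Type~II $\ZZ_{2k}$-code with that binary part, and Lemma~\ref{lem:equiv} then forces these $\ZZ_{2k}$-codes to be pairwise inequivalent. No discrepancy to report.
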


%%%%%%%%%%%%%%%%%%%%%%%%%%%%%%%%%%%%%%%%%%%%%%%%
\begin{table}[thb]
\caption{Binary parts of $C_{2k,32,i}$ $(k=4,5,\ldots,10)$}
\label{Tab:Fig32}
\centering
\medskip
{\small
%{\footnotesize
%{\scriptsize
\begin{tabular}{c|ccccc}
\noalign{\hrule height0.8pt}
$k$ & $\textrm{C}81$ &$\textrm{C}82$ &$\textrm{C}83$ &$\textrm{C}84$ &$\textrm{C}85$ \\
\hline
 4&2&4&1&3&5\\
 5&2&4&1&5&3\\
 6&2&5&1&4&3\\
 7&2&4&1&3&5\\
 8&2&4&1&5&3\\
 9&2&4&1&5&3\\
10&2&5&1&3&4\\
\noalign{\hrule height0.8pt}
\end{tabular}
}
\end{table}

% %%%%%%%%%%%%%%%%%%%%%%%%%%%%%%%%%%%%%%%%%%%%%%%%%%%%
% \begin{figure}[thb]
% \centering
% {\small
% %{\footnotesize
% \begin{tabular}{r}
% $\textrm{C}81\cong C_{8,32,2}^{(2)}\cong C_{10,32,2}^{(2)}\cong C_{12,32,2}^{(2)}
% \cong C_{14,32,2}^{(2)}\cong C_{16,32,2}^{(2)}\cong C_{18,32,2}^{(2)}\cong C_{20,32,2}^{(2)}$\\
% $\textrm{C}82\cong C_{8,32,4}^{(2)}\cong C_{10,32,4}^{(2)}\cong C_{12,32,5}^{(2)}
% \cong C_{14,32,4}^{(2)}\cong C_{16,32,4}^{(2)}\cong C_{18,32,4}^{(2)}\cong C_{20,32,5}^{(2)}$\\
% $\textrm{C}83\cong C_{8,32,1}^{(2)}\cong C_{10,32,1}^{(2)}\cong C_{12,32,1}^{(2)}
% \cong C_{14,32,1}^{(2)}\cong C_{16,32,1}^{(2)}\cong C_{18,32,1}^{(2)}\cong C_{20,32,1}^{(2)}$\\
% $\textrm{C}84\cong C_{8,32,3}^{(2)}\cong C_{10,32,5}^{(2)}\cong C_{12,32,4}^{(2)}
% \cong C_{14,32,3}^{(2)}\cong C_{16,32,5}^{(2)}\cong C_{18,32,5}^{(2)}\cong C_{20,32,3}^{(2)}$\\
% $\textrm{C}85\cong C_{8,32,5}^{(2)}\cong C_{10,32,3}^{(2)}\cong C_{12,32,3}^{(2)}
% \cong C_{14,32,5}^{(2)}\cong C_{16,32,3}^{(2)}\cong C_{18,32,3}^{(2)}\cong C_{20,32,4}^{(2)}$\\
% \end{tabular}
% }
% \caption{Binary parts of $C_{2k,32,i}$ $(k\in \{4,5,\ldots,10\}, i \in \{1,2,3,4,5\})$}
% \label{Fig:Fig32}
% \end{figure}
% %%%%%%%%%%%%%%%%%%%%%%%%%%%%%%%%%%%%%%%%%%%%%%%%%%%%

We end this section with the following question.

\begin{Q}
Suppose that $k \ge 11$.
For every binary extremal doubly even self-dual code $B$ of length $32$,
is there an extremal Type~II $\ZZ_{2k}$-code $C$
such that $C^{(2)} \cong B$.
\end{Q}

%%%%%%%%%%%%%%%%%%%%
\section{Extremal Type~II $\ZZ_{4}$-codes of lengths 56 and 64}
\label{sec:56-64}

In this section, by four-negacirculant codes and by Theorem~\ref{thm:II},
we construct new extremal Type~II $\ZZ_4$-codes of lengths $56$ and $64$.

%%%%%%
\subsection{Extremality}
%%%%
We describe how to determine the extremality of a given Type~II $\ZZ_4$-code
of lengths $48,56$ and $64$.
Although the following lemma is trivial, we give a proof for completeness.

\begin{lem}\label{lem:extremal}
Suppose that $n \in \{48,56,64\}$.
Let $C$ be a Type~II $\ZZ_4$-code of length $n$.
Let $A_4(C)$ denote the even unimodular lattice given in~\eqref{eq:A}.
Then $C$ is extremal if and only if $A_4(C)$ has minimum norm $4$ and kissing number $2n$.
\end{lem}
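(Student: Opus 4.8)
The plan is to restate extremality in terms of Euclidean weights and then match those weights against the short vectors that Construction~A produces in $A_4(C)$. For $k=2$ and $n\in\{48,56,64\}$ we have $4k\lfloor n/24\rfloor+4k=24$, so by~\eqref{eq:B} the code $C$ is extremal if and only if $d_E(C)=24$. Since $C$ is Type~II, every Euclidean weight is a multiple of $8$, hence $d_E(C)\in\{8,16,24\}$, and extremality is equivalent to $C$ having no codeword of Euclidean weight $8$ and no codeword of Euclidean weight $16$. Thus it is enough to prove: (a) $A_4(C)$ has minimum norm $4$ if and only if $C$ has no codeword of Euclidean weight $8$; and (b) assuming (a), $A_4(C)$ has kissing number $2n$ if and only if $C$ has no codeword of Euclidean weight $16$.

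Part (a) is immediate from the formula $\min(A_4(C))=\min\{d_E(C)/4,4\}$ recalled after~\eqref{eq:A}: this equals $4$ exactly when $d_E(C)\ge 16$, which, for $d_E(C)\in\{8,16,24\}$, means precisely that $C$ has no codeword of Euclidean weight $8$.

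For part (b) I would analyse the cosets of $4\ZZ^n$ inside $\rho(C)+4\ZZ^n$. Reducing $\rho(c)$ modulo $4$ recovers $c$, so these cosets are in bijection with the codewords $c\in C$; moreover, coordinatewise, the representative of $\rho(c_i)$ of least absolute value has square equal to the Euclidean-weight contribution of $c_i$, so the least norm occurring in the coset indexed by $c$ is $\wt_E(c)/4$. Consequently: a codeword of Euclidean weight at least $24$ contributes only vectors of norm at least $6$; a codeword of Euclidean weight $8$ is excluded by (a); the coset of $\0$ is $\tfrac{1}{2}(4\ZZ^n)$, whose norm-$4$ vectors are exactly the $2n$ vectors $\pm 2e_i$; and a codeword $c$ of Euclidean weight $16$ contributes the norm-$4$ vector $\tfrac{1}{2}\rho(c)$, which is never equal to any $\pm 2e_i$ because no coordinate of $\rho(c)$ is $\pm 4$. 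Hence the kissing number of $A_4(C)$ equals $2n$ if and only if $C$ has no codeword of Euclidean weight $16$, and combining (a) and (b) gives the lemma.

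I do not expect a serious obstacle; the only point deserving attention is that, unlike in Lemma~\ref{lem:min}, the minimum-norm condition by itself is insufficient here --- for $k=2$ and $n\in\{48,56,64\}$ the equality $\min(A_4(C))=4$ holds for both $d_E(C)=16$ and $d_E(C)=24$ --- so the kissing-number condition is exactly the additional information needed to exclude weight-$16$ codewords. The underlying computations (minimising $\sum_i v_i^2$ over a coset one coordinate at a time, listing the norm-$4$ vectors of $\tfrac{1}{2}(4\ZZ^n)$, and checking that each weight-$16$ codeword lies in its own coset) are routine.
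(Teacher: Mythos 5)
Your proof is correct and follows essentially the same route as the paper: both reduce extremality to the absence of Euclidean weights $8$ and $16$, identify weight-$8$ codewords with norm-$2$ vectors, note that $A_4(C)$ always contains the $2n$ vectors $\pm 2e_i$ of norm $4$, and use the extra norm-$4$ vector $\tfrac12\rho(c)$ coming from a weight-$16$ codeword to make the kissing-number condition do the remaining work. Your coset-by-coset computation of minimal norms simply spells out details the paper's shorter proof leaves implicit.
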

\begin{proof}
It is trivial that $C$ has a codeword of Euclidean weight $8$
if and only if $A_4(C)$ has a vector of norm $2$.
If $C$ has a codeword of Euclidean weight $16$,
then $A_4(C)$ has a vector of norm $4$.
In addition, 
$A_4(C)$ has $2n$ vectors of norm $4$, which have the following form
\[
(\pm 2,0,0,\ldots,0),
(0,\pm 2,0,\ldots,0), \ldots,
(0,0,\ldots,0,\pm 2).
\]
Hence, $A_4(C)$ has minimum norm $4$ and kissing number $2n$
if and only if $C$ has no codeword of Euclidean weight $16$.
\end{proof}

By the above lemma, 
we determined the extremality for all Type~II $\ZZ_4$-codes
found in this section.

%%%%%%
\subsection{Extremal Type~II $\ZZ_{4}$-codes of length 64}

The first example of an extremal Type~II $\ZZ_4$-code of length $64$ was found 
in~\cite{H2010}.  Note that the binary part of the code is not self-dual.

By considering four-negacirculant codes, our computer search found four 
extremal Type~II $\ZZ_4$-codes $C_{4,64,i}$ $(i=1,2,3,4)$
of length $64$.
%%%%
These codes have generator matrices
of form~\eqref{eq:4},
where the first rows $r_A$ and $r_B$ of the negacirculant matrices $A$ and $B$
are listed in Table~\ref{Tab:64}.

%%%%%%%%%%%%%%%%%%%%%%%%%%%%%%%%%%%%%%%%%%%%%%%%
\begin{table}[thb]
\caption{Extremal Type~II four-negacirculant $\ZZ_{4}$-codes of length $64$}
\label{Tab:64}
\centering
\medskip
%{\small
{\footnotesize
%{\scriptsize
\begin{tabular}{c|ll}
\noalign{\hrule height0.8pt}
Codes & \multicolumn{1}{c}{$r_A$}&\multicolumn{1}{c}{$r_B$} \\
\hline
$C_{4,64,1}$&
$(1,0,1,1,3,3,2,3,3,2,2,3,3,2,3,2)$&$(2,0,1,1,3,1,2,3,3,2,3,1,2,0,3,0)$\\
$C_{4,64,2}$&
$(3,0,3,2,3,0,3,2,0,3,0,1,2,2,0,0)$&$(0,2,0,0,1,2,3,2,1,2,0,0,3,0,2,1)$\\
$C_{4,64,3}$&
$(3,0,0,0,2,0,1,0,2,3,2,2,3,0,1,1)$&$(1,2,1,3,1,3,2,0,0,3,3,2,1,0,1,2)$\\
$C_{4,64,4}$&
$(0,0,3,3,3,1,1,1,0,1,0,3,0,3,3,2)$&$(1,0,1,2,1,2,2,0,3,1,2,3,1,0,3,3)$\\
\noalign{\hrule height0.8pt}
\end{tabular}
}
\end{table}

In addition, by Theorem~\ref{thm:II}, our computer search found one more extremal Type~II
$\ZZ_4$-code $C_{4,64,5}$ as $C(M_{4,64,2},x,y)$, where
$M_{4,64,2}$ denotes the right half of the generator matrix of
$C_{4,64,2}$, and 
\begin{align*}
&x=( 1, 1, 3, 0, 1, 3, 1, 0, 3, 2, 3, 0, 2, 2, 1, 1, 1, 2, 1, 3, 2, 2, 1, 0,
0, 0, 0, 0, 0, 0, 1, 1), \\
&y=(1, 3, 2, 0, 0, 3, 2, 2, 0, 2, 0, 3, 1, 2, 1, 0, 2, 0, 2, 2, 0, 2, 2, 3,
1, 1, 1, 1, 2, 1, 2, 2).
\end{align*}

By Gleason's theorem (see~\cite{MS}), the weight enumerator of a
binary doubly even self-dual code of length $64$ and minimum weight
at least $8$ is written as
\[
W_{64}(a)=
1
+ a y^8
+( 2976  + 20 a) y^{12}
+( 454956  + 2 a) y^{16}
+ \cdots,
\]
where $a$ is the number of codewords of weight $8$.
Our computer search
verified that
the binary parts $C_{4,64,i}^{(2)}$ $(i=1,2,3,4,5)$ have
weight enumerators
$W_{64}(16)$,
$W_{64}(64)$,
$W_{64}( 0)$,
$W_{64}( 0)$,
$W_{64}(19)$, respectively.
%% We verified that
%% the binary parts $C_{4,64,i}^{(2)}$ $(i=1,2,3,4,5)$ have
%% the following weight enumerators:
%% \begin{align*}
%% &1+ 16 y^{8}+ 3296 y^{12} +454988 y^{16}+\cdots,\\
%% &1+ 64 y^{8}+ 4256 y^{12} +455084 y^{16}+\cdots,\\
%% &1+ 2976 y^{12} +454956 y^{16}+\cdots,\\
%% &1+ 2976 y^{12} +454956 y^{16}+\cdots,\\
%% &1+ 19 y^{8} +3356 y^{12} +454994 y^{16}+\cdots,
%% \end{align*}
%% respectively.
In addition, our computer search verified that
the binary parts $C_{4,64,3}^{(2)}$ and $C_{4,64,4}^{(2)}$ are inequivalent.
By Lemma~\ref{lem:equiv}, we have the following proposition.

\begin{prop}
There are at least six inequivalent extremal Type~II $\ZZ_{4}$-codes of length $64$.
\end{prop}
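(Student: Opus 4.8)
The plan is to establish that the six codes $C_{4,64,1}, C_{4,64,2}, C_{4,64,3}, C_{4,64,4}, C_{4,64,5}$ together with the code from~\cite{H2010} are pairwise inequivalent extremal Type~II $\ZZ_4$-codes of length $64$. First I would recall that each $C_{4,64,i}$ $(i=1,2,3,4)$ is a four-negacirculant code with the parameters listed in Table~\ref{Tab:64}; by the criteria at the end of Section~\ref{sec:2}, the condition $AA^T + BB^T = -I_{16}$ together with $\wt_E(r_A) + \wt_E(r_B) \equiv -1 \pmod 8$ forces each to be Type~II, and a {\sc Magma} computation of $\min(A_4(C_{4,64,i}))$ and the kissing number of $A_4(C_{4,64,i})$ combined with Lemma~\ref{lem:extremal} certifies extremality. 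Likewise $C_{4,64,5} = C(M_{4,64,2},x,y)$ is Type~II by Theorem~\ref{thm:II} (one checks $\wt_E(x) \equiv \wt_E(y) \equiv 0 \pmod 8$ and $\langle x,y\rangle = 0$), and extremality is again verified via Lemma~\ref{lem:extremal}.

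The core of the argument is the inequivalence. By Lemma~\ref{lem:equiv}, two Type~II $\ZZ_4$-codes with inequivalent binary parts are themselves inequivalent, so I would separate the codes according to the weight enumerator of the binary part. The computation already reported gives that $C_{4,64,1}^{(2)}$, $C_{4,64,2}^{(2)}$, and $C_{4,64,5}^{(2)}$ have weight enumerators $W_{64}(16)$, $W_{64}(64)$, $W_{64}(19)$ respectively, which are mutually distinct and distinct from $W_{64}(0)$; hence these three codes are inequivalent to each other and to $C_{4,64,3}$, $C_{4,64,4}$. It remains to separate $C_{4,64,3}$ from $C_{4,64,4}$: both binary parts have weight enumerator $W_{64}(0)$, so the weight enumerator is useless, and one must appeal directly to a {\sc Magma} equivalence test on $C_{4,64,3}^{(2)}$ and $C_{4,64,4}^{(2)}$, which the excerpt states has been carried out and returns ``inequivalent.'' This yields five pairwise inequivalent codes among the newly constructed ones.

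Finally I would bring in the code of~\cite{H2010}: its binary part is \emph{not} self-dual, whereas the binary part of any code with generator matrix of the form $(I_n \mid A)$ is a binary doubly even self-dual code (as recalled in Section~\ref{sec:2}, citing~\cite{DGH} and~\cite{DHS}). Since equivalence of $\ZZ_4$-codes induces equivalence of binary parts and self-duality is an equivalence invariant, the code of~\cite{H2010} cannot be equivalent to any of $C_{4,64,1},\dots,C_{4,64,5}$. Together this produces six inequivalent extremal Type~II $\ZZ_4$-codes of length $64$, proving the proposition.

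The main obstacle is the pair $C_{4,64,3}, C_{4,64,4}$: because their binary parts share the weight enumerator $W_{64}(0)$ and are both extremal doubly even self-dual codes of length $64$ with no weight-$8$ words, no cheap invariant distinguishes them, and one is forced to rely on a genuine (and potentially expensive) canonical-form or automorphism-based equivalence computation in {\sc Magma}. Everything else is either a routine finite check (Type~II, extremality via Lemma~\ref{lem:extremal}) or an immediate consequence of the invariance results (Lemma~\ref{lem:equiv} and the self-duality of binary parts).
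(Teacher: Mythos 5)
Your proposal is correct and follows essentially the same route as the paper: distinguish $C_{4,64,1},C_{4,64,2},C_{4,64,5}$ from each other and from $C_{4,64,3},C_{4,64,4}$ via the weight enumerators $W_{64}(16), W_{64}(64), W_{64}(19)$ versus $W_{64}(0)$ of the binary parts, separate $C_{4,64,3}$ and $C_{4,64,4}$ by a direct equivalence test on their binary parts, and add the code of~\cite{H2010}, whose binary part is not self-dual, all combined through Lemma~\ref{lem:equiv}. Your explicit justification of the last step (self-dual versus non-self-dual binary parts) is merely spelled out more fully than in the paper, but the argument is the same.
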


%%%%%%
\subsection{Extremal Type~II $\ZZ_{4}$-codes of length 56}
The first example of an extremal Type~II $\ZZ_4$-code of length $56$ was found 
in~\cite{H2010}.  Note that the binary part of the code is not self-dual.
Two more extremal Type~II $\ZZ_4$-codes of length $56$ were constructed 
by considering double circulant codes in~\cite{H2016}.
The two codes are denoted by 
$\mathcal{D}_{56,1}$ and $\mathcal{C}_{56}$ in~\cite{H2016}.
Note that the binary parts of these codes are binary extremal
doubly even self-dual codes.

By considering four-negacirculant codes, our computer search found
three extremal Type~II $\ZZ_4$-codes $C_{4,56,i}$ $(i=1,2,3)$
of length $56$.
%%%%
These codes have generator matrices
of form~\eqref{eq:4},
where the first rows $r_A$ and $r_B$ of the negacirculant matrices $A$ and $B$
are listed in Table~\ref{Tab:56}.

By Gleason's theorem (see~\cite{MS}), the weight enumerator of a
binary doubly even self-dual code of length $56$ and minimum weight
at least $8$ is written as
\[
W_{56}(a)=
1
+ a y^8
+ (8190  + 6 a) y^{12}
+ (622314  - 83 a) y^{16} 
+ \cdots,
\]
where $a$ is the number of codewords of weight $8$.
Our computer search
verified that
the binary parts $C_{4,56,i}^{(2)}$ $(i=1,2,3)$ have
weight enumerators
$W_{56}(35)$,
$W_{56}(42)$,
$W_{56}(0)$, respectively.
In addition, our computer search verified that 
$\mathcal{D}_{56,1}^{(2)}$, $\mathcal{C}_{56}^{(2)}$
and $C_{4,56,3}^{(2)}$ are inequivalent.
By Lemma~\ref{lem:equiv}, we have the following proposition.

\begin{prop}
There are at least six inequivalent extremal Type~II $\ZZ_{4}$-codes of length $56$.
\end{prop}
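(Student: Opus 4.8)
The plan is to exhibit six extremal Type~II $\ZZ_4$-codes of length $56$ and argue that no two of them are equivalent. Three of these codes are already in the literature: the code of~\cite{H2010}, whose binary part is not self-dual, and the codes $\mathcal{D}_{56,1}$ and $\mathcal{C}_{56}$ of~\cite{H2016}, whose binary parts are binary extremal doubly even self-dual codes. The remaining three codes are the four-negacirculant codes $C_{4,56,1}, C_{4,56,2}, C_{4,56,3}$ produced by the computer search, with first rows $r_A,r_B$ of the negacirculant blocks $A,B$ listed in Table~\ref{Tab:56}. First I would verify that each $C_{4,56,i}$ is indeed an extremal Type~II $\ZZ_4$-code: self-duality and the Type~II property follow from Lemma~\ref{lem:SD} via the criteria $AA^T+BB^T=-I_n$ and $\wt_E(r_A)+\wt_E(r_B)\equiv -1\pmod{4k}$ stated at the end of Section~\ref{sec:2}, and extremality is checked through Lemma~\ref{lem:extremal}, i.e.\ by confirming (by computer) that the Construction~A lattice $A_4(C_{4,56,i})$ has minimum norm $4$ and kissing number $2\cdot 56 = 112$.

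Next I would separate the six codes into equivalence classes using the invariant supplied by Lemma~\ref{lem:equiv}: equivalent Type~II $\ZZ_4$-codes have equivalent binary parts. The code of~\cite{H2010} has a binary part that is not self-dual, so it cannot be equivalent to any of the other five codes, all of whose binary parts are self-dual. For the remaining five codes I would use the weight enumerator of the binary part as a coarser invariant: by Gleason's theorem the weight enumerator of a binary doubly even self-dual code of length $56$ with minimum weight $\ge 8$ is $W_{56}(a)$ as displayed above, parametrized by the number $a$ of weight-$8$ codewords. The computer search shows $C_{4,56,i}^{(2)}$ has parameter $a=35,42,0$ for $i=1,2,3$ respectively, so $C_{4,56,1},C_{4,56,2},C_{4,56,3}$ have pairwise distinct binary-part weight enumerators and hence are pairwise inequivalent, and each of them with $a\neq 0$ is automatically distinct from $\mathcal{D}_{56,1}$ and $\mathcal{C}_{56}$ whenever those have a different value of $a$ (the codes $\mathcal{D}_{56,1}^{(2)},\mathcal{C}_{56}^{(2)}$ being binary \emph{extremal} doubly even self-dual codes, i.e.\ with $a=0$, they are distinguished from $C_{4,56,1},C_{4,56,2}$). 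It remains to separate the three codes whose binary part has $a=0$, namely $\mathcal{D}_{56,1},\mathcal{C}_{56}$ and $C_{4,56,3}$; here the weight enumerator no longer distinguishes them, so I would invoke the computer verification that $\mathcal{D}_{56,1}^{(2)}$, $\mathcal{C}_{56}^{(2)}$ and $C_{4,56,3}^{(2)}$ are pairwise inequivalent as binary codes, which by Lemma~\ref{lem:equiv} forces $\mathcal{D}_{56,1},\mathcal{C}_{56},C_{4,56,3}$ to be pairwise inequivalent.

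Putting these observations together: the code of~\cite{H2010}, the codes $\mathcal{D}_{56,1},\mathcal{C}_{56}$, and the new codes $C_{4,56,1},C_{4,56,2},C_{4,56,3}$ form a list of six extremal Type~II $\ZZ_4$-codes of length $56$ that are pairwise inequivalent, which proves the proposition. The main obstacle is the last step: once one restricts to the three codes whose binary parts share the same (extremal, $a=0$) weight enumerator $W_{56}(0)$, the combinatorial invariants available on paper no longer suffice to tell them apart, and one must rely on a genuine equivalence test for the binary codes (e.g.\ comparison of automorphism group orders or a direct isomorphism search in {\sc Magma}); the rest of the argument is bookkeeping with the already-established lemmas.
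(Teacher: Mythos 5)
Your proposal is correct and follows essentially the same route as the paper: it distinguishes the six codes via their binary parts using Lemma~\ref{lem:equiv}, separating the code of~\cite{H2010} by the non-self-duality of its binary part, separating $C_{4,56,1}$ and $C_{4,56,2}$ by the weight enumerator parameters $a=35,42$, and relying on the computer verification that $\mathcal{D}_{56,1}^{(2)}$, $\mathcal{C}_{56}^{(2)}$ and $C_{4,56,3}^{(2)}$ are pairwise inequivalent. This matches the paper's argument in all essentials.
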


%%%%%%%%%%%%%%%%%%%%%%%%%%%%%%%%%%%%%%%%%%%%%%%%
\begin{table}[thb]
\caption{Extremal Type~II four-negacirculant $\ZZ_{4}$-codes of length $56$}
\label{Tab:56}
\centering
\medskip
%{\small
{\footnotesize
%{\scriptsize
\begin{tabular}{c|ll}
\noalign{\hrule height0.8pt}
Codes & \multicolumn{1}{c}{$r_A$}&\multicolumn{1}{c}{$r_B$} \\
\hline
$C_{4,56,1}$&
$(2,2,0,1,2,0,2,0,2,2,1,1,2,1)$&$(0,0,3,3,2,0,3,1,3,1,1,0,0,2)$\\
$C_{4,56,2}$&
$(2,2,2,0,1,0,1,0,1,2,3,0,3,1)$&$(2,3,0,0,0,1,3,2,1,2,0,2,2,1)$\\
$C_{4,56,3}$&
$(1,2,2,0,3,2,2,3,1,1,1,3,1,0)$&$(2,1,3,1,1,2,2,2,0,1,3,3,0,0)$\\
\noalign{\hrule height0.8pt}
\end{tabular}
}
\end{table}

By an approach is similar to that used in the previous subsection, 
we tried to construct a new extremal Type~II $\ZZ_{4}$-code of length $56$
by Theorem~\ref{thm:II}.
However, our extensive search failed to construct such a code.

%%%%%%%%%%%%%%%%%%%%%%%%%%%%%%%%%%%
\bigskip
\noindent
{\bf Acknowledgments.}
This work was supported by JSPS KAKENHI Grant Number 19H01802.
The author would like to thank the anonymous reviewers for useful comments.

%%%%%%%%%%%%%%%%%%%  References  %%%%%%%%%%%%%%%%%%%%%%%%

\begin{landscape}

%%%%%%%%%%%%%%%%%%%%%%%%%%%%%%%%%%%%%%%%%%%%%%%%
\begin{table}[thb]
\caption{Extremal Type~II four-negacirculant $\ZZ_{2k}$-codes of length $24$ 
$(k=5,12,\ldots,20)$}
\label{Tab:24-2}
\centering
\medskip
{\small
%{\footnotesize
%{\scriptsize
\begin{tabular}{c|ll||c|ll}
\noalign{\hrule height0.8pt}
Codes & \multicolumn{1}{c}{$r_A$}&\multicolumn{1}{c||}{$r_B$} &
Codes & \multicolumn{1}{c}{$r_A$}&\multicolumn{1}{c}{$r_B$} \\
\hline
$C_{10,24,1}$ & $(9,8,2,5,4,2)$&$(3,8,6,0,6,0)$ &
$C_{26,24,4}$&$(14, 13, 16,  2,  8, 20)$&$( 7,  7,  7, 11, 11,  9)$\\
$C_{10,24,2}$ & $(1,7,3,1,9,3)$&$(7,5,3,1,1,8)$ &
$C_{28,24,1}$&$(15,  8, 10,  7,  6,  8)$&$(22, 16,  9, 16, 22, 14)$\\
$C_{10,24,3}$ & $(5,8,1,6,5,0)$&$(5,7,2,5,2,1)$ &
$C_{28,24,2}$&$( 2, 23, 20, 27, 24,  1)$&$(22,  3,  0,  3, 21, 25)$\\
$C_{12,24,1}$ &$(6, 4, 8, 8, 5, 6)$&$(10, 9, 0, 2, 5,10)$ &
$C_{28,24,3}$&$(13, 15,  1, 18, 18, 20)$&$( 3,  7,  5, 22,  9, 22)$\\
$C_{12,24,2}$ &$(3, 7, 9, 7,11,11)$&$( 9, 9, 2, 7,11, 1)$ &
$C_{28,24,4}$&$(17,  5,  3, 13, 25, 14)$&$(21,  1,  5,  1, 27, 25)$\\
$C_{12,24,3}$ &$(6, 9, 8, 7, 4, 3)$&$( 1, 0, 1, 6, 9, 3)$ &
$C_{30,24,1}$&$( 1,  8, 28,  9, 12, 18)$&$( 6,  8,  4,  2,  1, 20)$\\
$C_{14,24,1}$ &$( 2,6,11,4, 0, 5)$&$( 6, 4,6,0,8,11)$ &
$C_{30,24,2}$&$(29, 28, 23, 13, 21, 14)$&$(22, 11,  4,  3,  0, 27)$\\
$C_{14,24,2}$ &$(11,9,13,7, 9, 5)$&$(10, 5,9,7,3, 7)$ &
$C_{30,24,3}$&$(29, 28, 23, 13, 21, 14)$&$(22, 11,  7, 21, 18,  0)$\\
$C_{14,24,3}$ &$( 7,8,11,3,11,12)$&$(13,10,1,6,5, 0)$ &
$C_{30,24,4}$&$(18, 19, 29,  5, 23, 11)$&$(27,  1,  1,  3, 27, 23)$\\
$C_{16,24,1}$ &$(13, 4, 2,11, 4, 4)$&$(15, 8,14, 0, 8,14)$ &
$C_{32,24,1}$&$( 2, 16, 16,  4, 29,  2)$&$(18,  6,  1,  2,  0,  7)$\\
$C_{16,24,2}$ &$( 1, 3,13,15, 5,14)$&$( 5, 9, 3, 7, 9, 5)$ &
$C_{32,24,2}$&$(16,  5, 16, 21, 22,  5)$&$(27, 27,  0, 29,  2,  7)$\\
$C_{16,24,3}$ &$( 8, 9, 8, 3,15, 1)$&$( 6, 0, 2, 3,15, 9)$ &
$C_{32,24,3}$&$(14, 11, 19, 17, 30,  0)$&$(29, 13,  2, 27,  6,  1)$\\
$C_{18,24,1}$&$( 3, 2, 8,15, 8,14)$&$(16, 5, 2,14, 0, 0)$ &
$C_{32,24,4}$&$(15,  3, 31, 23, 29, 27)$&$(27, 10, 13,  3, 19,  3)$\\
$C_{18,24,2}$ &$(15, 3, 8,17,13, 5)$&$(17, 3,17, 1, 3, 5)$ &
$C_{34,24,1}$&$(16, 19, 30, 22, 25, 14)$&$( 0,  3,  4, 32,  2,  0)$\\
$C_{18,24,3}$ &$( 5, 9,13,17, 5, 7)$&$( 4,10, 2, 8, 2, 1)$ &
$C_{34,24,2}$&$(33, 13, 25,  6, 33,  8)$&$(11, 12,  1, 12, 33, 16)$\\
$C_{20,24,1}$&$(17,10,16, 5,10, 4)$&$(12,11,12, 8,14, 8)$ &
$C_{34,24,3}$&$( 4, 25,  3, 27, 15, 13)$&$( 3, 29,  9,  5, 31, 27)$\\
$C_{20,24,2}$&$(13, 3,17,17, 3,17)$&$( 5, 2,15, 5,15, 1)$ &
$C_{34,24,4}$&$(18,  7,  9, 17, 10, 18)$&$(21, 10,  3, 24, 23, 13)$\\
$C_{20,24,3}$&$( 1, 9, 8,17, 2,13)$&$( 5, 6, 1, 4,13, 8)$ &
$C_{36,24,1}$&$(20,  7, 24, 22, 11,  6)$&$( 2,  8,  0,  7,  4, 24)$\\
$C_{22,24,1}$&$(11,  0, 10, 12, 10, 14)$&$(17,  2,  0, 19,  6, 10)$&
$C_{36,24,2}$&$( 2, 11, 34, 35,  6,  1)$&$(11,  2,  1, 25, 23,  8)$\\
$C_{22,24,2}$&$( 7, 21, 11,  5,  2,  3)$&$(21,  7,  1,  7, 15, 13)$&
$C_{36,24,3}$&$(15, 15, 25,  7,  3, 10)$&$(35,  5,  1, 33, 13, 19)$\\
$C_{22,24,3}$&$(16,  9, 14, 11,  9, 21)$&$(16, 17,  4,  5, 18,  5)$&
$C_{36,24,4}$&$(21, 34, 13, 17, 17, 25)$&$(31, 20,  2, 31, 22, 22)$\\
$C_{22,24,4}$&$( 1, 10,  7, 19,  5, 19)$&$(12, 15, 10,  6, 11,  4)$&
$C_{38,24,1}$&$(16, 36, 16, 32, 25, 26)$&$(16, 29,  0, 22, 25, 14)$\\
$C_{24,24,1}$&$( 7,  4,  6, 17,  8, 18)$&$( 9,  8,  4, 10,  4, 12)$&
$C_{38,24,2}$&$(32, 33, 18,  1, 20,  1)$&$(18,  7, 20, 19,  7,  9)$\\
$C_{24,24,2}$&$( 0,  3, 16, 13, 22, 13)$&$(15,  5,  1,  0, 17,  2)$&
$C_{38,24,3}$&$(17, 17, 27, 17, 26,  9)$&$( 2,  2,  3, 34, 22, 13)$\\
$C_{24,24,3}$&$( 9, 11,  5, 23, 11, 19)$&$( 5, 19,  2, 21,  3, 15)$&
$C_{38,24,4}$&$(29,  3,  4, 15, 17, 31)$&$(11,  3,  3, 33, 13, 17)$\\
$C_{24,24,4}$&$( 7,  3, 16, 12, 14,  3)$&$( 2, 11,  0, 19, 23, 23)$&
$C_{40,24,1}$&$(34, 26, 39,  2, 38,  6)$&$(26, 10,  1,  0,  4, 23)$\\
$C_{26,24,1}$&$(17,  2, 14,  8,  8, 10)$&$(10, 15,  0,  0,  1, 10)$&
$C_{40,24,2}$&$(25, 38, 31, 19, 37,  8)$&$( 7, 20,  1,  0,  9, 22)$\\
$C_{26,24,2}$&$(15, 20,  5, 15, 17,  8)$&$(14,  0, 21, 11,  1, 12)$&
$C_{40,24,3}$&$( 5,  1, 32, 15, 30, 35)$&$(24, 25,  3,  3,  6, 28)$\\
$C_{26,24,3}$&$( 1,  7, 14, 23,  5, 21)$&$( 3, 25,  1, 11, 11, 13)$&
$C_{40,24,4}$&$(37, 13, 15, 19, 27, 15)$&$( 3, 15,  5, 31, 31, 10)$\\
\noalign{\hrule height0.8pt}
\end{tabular}
}
\end{table}

%%%%%%%%%%%%%%%%%%%%%%%%%%%%%%%%%%%%%%%%%%%%%%%%
\begin{table}[thbp]
\caption{Extremal Type~II $\ZZ_{2k}$-codes of length $24$ $(k=5,6,\ldots,9)$}
\label{Tab:24-xy1}
\centering
\medskip
%{\small
{\footnotesize
%{\scriptsize
\begin{tabular}{c|c|ll}
\noalign{\hrule height0.8pt}
Codes & $A$ & \multicolumn{1}{c}{$x$}&\multicolumn{1}{c}{$y$} \\
\hline
$C_{10,24,4}$&$M_{10,24,1}$&$(7,1,6,2,4,1,0,3,5,1,7,7)$&$(0,0,0,0,1,5,9,2,4,0,3,2)$\\
$C_{10,24,5}$&$M_{10,24,1}$&$(6,5,6,0,3,6,3,5,4,0,8,2)$&$(0,0,0,4,0,4,4,6,9,5,7,1)$\\
$C_{10,24,6}$&$M_{10,24,1}$&$(1,7,1,1,0,6,0,4,5,7,1,9)$&$(0,0,0,0,3,0,8,1,5,8,4,9)$\\
$C_{10,24,7}$&$M_{10,24,4}$&$(1,9,5,9,8,5,5,3,5,2,2,4)$&$(0,0,0,0,8,2,3,1,2,1,6,1)$\\
$C_{10,24,8}$&$M_{10,24,6}$&$(9,6,0,6,0,4,7,6,7,8,2,7)$&$(0,0,0,5,4,6,7,2,1,3,4,8)$\\
$C_{10,24,9}$&$M_{10,24,7}$&$(5,4,3,8,8,6,7,0,0,4,0,9)$&$(0,0,0,0,0,0,1,6,5,5,8,7)$\\
$C_{12,24,4}$&$M_{12,24,1}$&$(5,7,9,5,6,7,7,1,0,8,1,10)$&$(0,0,0,0,1,5,1,8,10,2,6,9)$\\
$C_{12,24,5}$&$M_{12,24,1}$&$(11,4,10,8,7,8,2,3,0,0,1,10)$&$(0,0,0,0,10,3,11,10,5,11,6,4)$\\
$C_{12,24,6}$&$M_{12,24,2}$&$(8,4,3,9,8,10,0,5,2,8,7,10)$&$(0,0,0,0,9,2,0,0,7,3,8,3)$\\
$C_{12,24,7}$&$M_{12,24,4}$&$(7,2,11,8,2,0,2,10,6,1,5,0)$&$(0,0,0,1,3,10,2,4,6,11,3,4)$\\
$C_{12,24,8}$&$M_{12,24,5}$&$(3,4,5,1,7,8,4,7,3,4,7,9)$&$(0,0,0,3,6,7,7,6,1,8,10,8)$\\
$C_{12,24,9}$&$M_{12,24,7}$&$(5,6,6,10,1,2,5,9,1,5,5,11)$&$(0,0,1,0,6,7,8,7,0,11,6,10)$\\
$C_{14,24,4}$&$M_{14,24,1}$&
$(7,8,6,6,11,2,6,8,2,1,8,5)$&$(0,0,0,0,9,10,13,0,6,7,11,12)$\\
$C_{14,24,5}$&$M_{14,24,1}$&
$(0,9,6,7,12,8,5,7,4,12,0,8)$&$(0,0,0,0,11,10,1,3,2,2,9,10)$\\
$C_{14,24,6}$&$M_{14,24,3}$&
$(4,5,13,0,6,7,1,2,1,3,1,5)$&$(0,0,0,0,7,8,4,13,10,7,7,6)$\\
$C_{14,24,7}$&$M_{14,24,3}$&
$(4,11,12,13,8,13,10,2,1,4,8,0)$&$(0,0,0,0,6,9,11,0,7,6,2,3)$\\
$C_{14,24,8}$&$M_{14,24,6}$&
$(6,13,6,11,7,0,5,6,1,1,3,7)$&$(0,0,0,0,12,5,4,11,11,4,13,2)$\\
$C_{14,24,9}$&$M_{14,24,8}$&
$(6,6,10,1,2,6,8,3,4,5,13,6)$&$(0,0,0,1,3,3,2,7,7,11,11,1)$\\
$C_{16,24,4}$&$M_{16,24,1}$&
$(1,0,7,7,5,9,7,9,2,13,10,8)$&$(0,0,0,0,7,12,6,3,15,10,2,13)$\\
$C_{16,24,5}$&$M_{16,24,1}$&
$(11,14,13,10,5,11,12,5,1,8,9,13)$&$(0,0,0,3,0,7,9,7,14,0,12,12)$\\
$C_{16,24,6}$&$M_{16,24,1}$&
$(4,11,3,10,13,7,0,1,1,2,7,11)$&$(0,0,0,0,10,14,9,2,0,5,5,7)$\\
$C_{16,24,7}$&$M_{16,24,1}$&
$(14,14,15,10,2,7,15,11,2,0,0,8)$&$(0,0,0,0,1,9,1,6,15,6,10,8)$\\
$C_{16,24,8}$&$M_{16,24,3}$&
$(2,6,2,14,13,10,7,10,3,7,4,10)$&$(0,0,0,0,15,5,14,11,8,1,6,2)$\\
$C_{16,24,9}$&$M_{16,24,7}$&
$(10,14,9,14,3,14,7,1,5,13,9,7)$&$(0,0,0,0,9,4,3,9,12,14,1,12)$\\
$C_{18,24,4}$& $M_{18,24,2}$&
$(2,14,14,12,17,5,7,17,1,3,7,3)$&$(0,0,0,0,6,0,3,8,13,3,1,0)$\\
$C_{18,24,5}$& $M_{18,24,3}$&
$(8,4,13,3,13,4,2,10,0,5,2,0)$&$(0,0,0,0,13,13,7,7,11,17,3,3)$\\
$C_{18,24,6}$& $M_{18,24,3}$&
$(3,15,11,15,7,16,1,8,1,14,8,11)$&$(0,0,0,0,6,13,12,9,12,5,11,0)$\\
$C_{18,24,7}$& $M_{18,24,5}$&
$(11,17,9,14,12,3,11,1,0,17,6,3)$&$(0,0,0,0,8,11,0,13,12,1,10,11)$\\
$C_{18,24,8}$& $M_{18,24,5}$&
$(9,16,0,14,10,1,16,10,0,3,3,6)$&$(0,0,0,0,13,5,10,14,5,5,6,6)$\\
$C_{18,24,9}$& $M_{18,24,7}$&
$(4,3,0,9,0,6,0,3,4,2,3,12)$&$(0,0,0,0,4,3,0,1,4,7,11,2)$\\
\noalign{\hrule height0.8pt}
\end{tabular}
}
\end{table}

%%%%%%%%%%%%%%%%%%%%%%%%%%%%%%%%%%%%%%%%%%%%%%%%
\begin{table}[thb]
\caption{Extremal Type~II $\ZZ_{2k}$-codes of length $24$ $(k =10,12,\ldots,15)$}
\label{Tab:24-xy2}
\centering
\medskip
%{\small
{\footnotesize
%{\scriptsize
\begin{tabular}{c|c|ll}
\noalign{\hrule height0.8pt}
Codes & $A$ & \multicolumn{1}{c}{$x$}&\multicolumn{1}{c}{$y$} \\
\hline
$C_{20,24,4}$&$M_{20,24,1}$&
$(19,7,14,7,1,7,4,10,0,3,17,9)$&$(0,0,0,0,14,1,10,11,15,0,14,19)$\\
$C_{20,24,5}$&$M_{20,24,1}$&
$(16,10,16,5,7,10,15,2,0,0,17,16)$&$(0,0,0,0,10,1,1,9,14,18,19,4)$\\
$C_{20,24,6}$&$M_{20,24,1}$&
$(2,1,15,10,17,11,0,18,4,2,10,4)$&$(0,0,0,0,5,15,9,14,2,17,8,14)$\\
$C_{20,24,7}$&$M_{20,24,2}$&
$(11,0,17,19,0,1,0,4,0,4,0,14)$&$(0,0,0,0,14,10,19,0,9,6,5,9)$\\
$C_{20,24,8}$&$M_{20,24,3}$&
$(8,0,12,0,1,17,1,18,2,1,14,4)$&$(0,0,0,0,0,6,10,5,5,10,5,7)$\\
$C_{20,24,9}$&$M_{20,24,4}$&
$(13,6,19,6,7,3,6,4,0,2,0,2)$&$(0,0,0,1,8,14,5,8,9,3,2,14)$\\
$C_{22,24,5}$&$M_{22,24, 1}$&
$( 18, 20, 15, 19, 8, 18, 1, 6, 0, 2, 4, 7)$&
$( 8, 12, 15, 21, 3, 20, 9, 17, 1, 14, 9, 15)$\\
$C_{22,24,6}$&$M_{22,24, 1}$&
$( 20, 10, 4, 2, 16, 3, 0, 19, 2, 12, 11, 9)$&
$( 18, 3, 2, 18, 1, 13, 12, 10, 14, 14, 19, 14)$\\
$C_{22,24,7}$&$M_{22,24, 3}$&
$( 21, 12, 11, 6, 20, 12, 1, 0, 0, 1, 18, 4)$&
$( 13, 11, 12, 6, 0, 10, 7, 9, 6, 4, 8, 8)$\\
$C_{22,24,8}$&$M_{22,24, 3}$&
$( 19, 14, 3, 4, 15, 15, 7, 6, 0, 9, 17, 19)$&
$( 5, 7, 17, 12, 1, 0, 3, 3, 4, 15, 8, 15)$\\
$C_{22,24,9}$&$M_{22,24, 4}$&
$( 7, 18, 20, 4, 17, 17, 8, 15, 0, 8, 20, 16)$&
$( 2, 6, 20, 0, 1, 18, 4, 11, 3, 1, 0, 10)$\\
%%%%%%%%%%%%
$C_{24,24,5}$&$M_{24,24,1}$&$(18,21,21,18,23,9,0,0,0,0,10,8)$&$(10,5,1,8,8,20,12,23,4,6,21,10)$\\
$C_{24,24,6}$&$M_{24,24,1}$&$(9,3,23,3,23,14,0,0,1,1,9,10)$&$(0,18,21,4,2,20,14,4,17,7,5,16)$\\
$C_{24,24,7}$&$M_{24,24,2}$&$(13,21,17,12,7,11,0,0,0,5,9,19)$&$(10,5,21,0,19,12,0,16,20,6,3,10)$\\
$C_{24,24,8}$&$M_{24,24,2}$&$(21,16,20,22,9,13,0,0,0,2,2,15)$&$(5,13,8,15,5,14,3,8,7,1,9,10)$\\
$C_{24,24,9}$&$M_{24,24,4}$&$(4,10,6,3,22,3,0,0,0,0,21,21)$&$(6,16,15,6,20,7,20,8,17,0,18,9)$\\
%%%%%%%%%%%%
$C_{26,24,5}$&$M_{26,24,1}$&$(2,11,8,18,7,22,0,0,0,0,7,7)$&$(13,24,16,18,2,17,22,12,16,6,2,5,3)$\\
$C_{26,24,6}$&$M_{26,24,2}$&$(9,18,16,17,23,6,0,0,0,0,19,16)$&$(16,20,12,23,21,11,3,23,1,1,20,17)$\\
$C_{26,24,7}$&$M_{26,24,3}$&$(15,9,2,14,11,2,9,18,0,16,24,24)$&$(5,9,9,21,23,13,5,13,9,13,7,25)$\\
$C_{26,24,8}$&$M_{26,24,5}$&$(20,18,16,17,4,3,0,0,0,5,7,20)$&$(21,1,12,4,25,24,2,16,22,8,14,25)$\\
$C_{26,24,9}$&$M_{26,24,6}$&$(20,21,2,15,15,21,0,0,0,24,12,14)$&$(8,7,17,12,13,4,15,14,1,7,15,9)$\\
%%%%%%%%%%%%
$C_{28,24,5}$&$M_{28,24,1}$&$(21,21,4,10,2,8,0,0,0,1,21,2)$&$(2,27,26,23,23,12,23,27,11,25,20,5)$\\
$C_{28,24,6}$&$M_{28,24,1}$&$(20,19,24,13,25,12,20,4,0,10,8,13)$&$(13,17,10,15,8,22,20,18,11,18,24,2)$\\
$C_{28,24,7}$&$M_{28,24,2}$&$(16,26,10,6,20,27,0,0,0,3,23,17)$&$(0,13,0,21,22,27,26,3,17,27,25,7)$\\
$C_{28,24,8}$&$M_{28,24,3}$&$(23,20,12,18,14,26,0,0,0,3,3,11)$&$(4,2,27,18,20,22,18,25,14,3,17,4)$\\
$C_{28,24,9}$&$M_{28,24,4}$&$(3,16,16,15,22,7,0,0,0,0,8,13)$&$(16,16,22,22,15,14,5,14,20,1,25,18)$\\
%%%%%%%%%%%%
$C_{30,24,5}$&$M_{30,24,1}$&$(29,29,27,4,2,17,8,29,0,3,11,5)$&$(2,12,9,13,26,21,2,7,13,29,19,11)$\\
$C_{30,24,6}$&$M_{30,24,2}$&$(16,18,23,3,2,13,0,0,0,0,5,8)$&$(16,4,14,26,27,24,2,1,23,28,18,17)$\\
$C_{30,24,7}$&$M_{30,24,2}$&$(19,18,21,18,7,22,0,0,0,1,26,10)$&$(14,15,28,28,26,6,9,16,7,4,4,29)$\\
$C_{30,24,8}$&$M_{30,24,2}$&$(29,9,9,4,27,2,3,25,0,11,22,3)$&$(3,10,29,22,19,19,22,13,25,1,6,3)$\\
$C_{30,24,9}$&$M_{30,24,6}$&$(3,1,16,3,29,26,0,0,0,14,6,4)$&$(6,15,25,9,16,24,1,23,21,17,15,26)$\\
\noalign{\hrule height0.8pt}
\end{tabular}
}
\end{table}

%%%%%%%%%%%%%%%%%%%%%%%%%%%%%%%%%%%%%%%%%%%%%%%%
\begin{table}[thb]
\caption{Extremal Type~II $\ZZ_{2k}$-codes of length $24$ $(k =16,17,\ldots,20)$}
\label{Tab:24-xy3}
\centering
\medskip
%{\small
{\footnotesize
%{\scriptsize
\begin{tabular}{c|c|ll}
\noalign{\hrule height0.8pt}
Codes & $A$ & \multicolumn{1}{c}{$x$}&\multicolumn{1}{c}{$y$} \\
\hline
%%%%%%%%%%%%
$C_{32,24,5}$&$M_{32,24,1}$&$(18,10,25,10,3,22,23,6,0,0,28,23)$&$(15,8,10,23,30,10,24,4,27,20,29,0)$\\
$C_{32,24,6}$&$M_{32,24,1}$&$(14,4,9,10,22,3,31,18,0,7,14,12)$&$(30,4,29,22,3,31,16,16,31,12,8,10)$\\
$C_{32,24,7}$&$M_{32,24,2}$&$(21,23,19,23,3,1,23,15,0,8,8,0)$&$(9,5,19,4,23,24,29,13,24,9,29,0)$\\
$C_{32,24,8}$&$M_{32,24,2}$&$(27,27,7,25,0,25,28,31,0,0,9,5)$&$(2,13,27,27,14,16,30,6,18,4,12,9)$\\
$C_{32,24,9}$&$M_{32,24,3}$&$(22,24,21,19,31,31,0,4,0,8,22,18)$&$(8,18,29,12,3,10,20,0,14,28,21,5)$\\
%%%%%%%%%%%%
$C_{34,24,5}$&$M_{34,24,1}$&$(3,26,20,2,1,9,20,32,0,2,18,33)$&$(14,31,26,5,0,32,21,25,24,10,28,10)$\\
$C_{34,24,6}$&$M_{34,24,1}$&$(3,2,32,30,30,12,5,22,0,11,20,1)$&$(10,10,15,19,2,8,8,26,32,13,20,3)$\\
$C_{34,24,7}$&$M_{34,24,2}$&$(32,25,16,3,16,22,27,29,0,0,12,18)$&$(31,16,17,3,25,17,29,8,30,9,32,13)$\\
$C_{34,24,8}$&$M_{34,24,4}$&$(15,20,3,23,6,23,6,6,0,12,20,28)$&$(6,26,25,22,4,7,28,6,15,14,30,23)$\\
$C_{34,24,9}$&$M_{34,24,4}$&$(13,10,26,7,11,19,6,24,1,1,23,13)$&$(13,13,7,12,31,26,7,31,7,22,13,8)$\\
%%%%%%%%%%%%
$C_{36,24,5}$&$M_{36,24,1}$&$(35,16,10,11,0,33,20,23,0,2,2,4)$&$(35,25,34,0,22,11,35,0,35,11,3,19)$\\
$C_{36,24,6}$&$M_{36,24,1}$&$(28,15,24,19,13,16,8,21,0,0,12,2)$&$(3,9,5,33,7,24,23,3,2,21,22,32)$\\
$C_{36,24,7}$&$M_{36,24,1}$&$(34,25,25,32,20,4,27,20,0,0,31,20)$&$(8,8,0,26,31,24,33,4,34,35,11,26)$\\
$C_{36,24,8}$&$M_{36,24,3}$&$(19,29,26,9,30,6,30,18,0,1,2,22)$&$(3,29,31,29,20,3,13,10,28,33,10,31)$\\
$C_{36,24,9}$&$M_{36,24,4}$&$(4,28,15,7,33,26,13,19,1,1,2,21)$&$(33,9,20,4,6,32,17,24,15,12,22,10)$\\
%%%%%%%%%%%%
$C_{38,24,5}$&$M_{38,24,1}$&$(18,5,17,23,6,5,29,1,0,0,23,21)$&$(3,33,36,1,15,11,8,12,19,17,10,33)$\\
$C_{38,24,6}$&$M_{38,24,1}$&$(1,22,16,12,19,14,31,19,0,0,36,24)$&$(31,5,17,28,10,21,18,11,35,25,5,20)$\\
$C_{38,24,7}$&$M_{38,24,1}$&$(34,16,34,7,37,2,17,19,0,0,6,32)$&$(21,4,28,35,26,34,9,30,36,11,0,26)$\\
$C_{38,24,8}$&$M_{38,24,3}$&$(2,22,1,1,37,5,18,9,0,1,19,35)$&$(30,11,12,23,11,32,24,17,29,35,19,33)$\\
$C_{38,24,9}$&$M_{38,24,3}$&$(4,28,24,35,37,13,3,34,0,0,28,12)$&$(30,18,10,10,13,32,15,36,7,6,23,32)$\\
%%%%%%%%%%%%
$C_{40,24,5}$&$M_{40,24,1}$&$(11,26,7,16,35,33,17,9,0,1,7,2)$&$(24,36,27,6,1,1,34,18,10,7,16,34)$\\
$C_{40,24,6}$&$M_{40,24,1}$&$(12,2,17,30,8,28,22,39,0,0,9,17)$&$(32,38,35,4,28,9,30,39,19,30,12,20)$\\
$C_{40,24,7}$&$M_{40,24,2}$&$(19,20,22,15,5,23,15,9,0,1,10,33)$&$(3,1,29,34,19,21,1,1,28,15,28,6)$\\
$C_{40,24,8}$&$M_{40,24,3}$&$(6,39,12,39,21,27,1,32,1,1,15,34)$&$(35,29,21,13,5,37,28,14,23,1,28,34)$\\
$C_{40,24,9}$&$M_{40,24,3}$&$(23,16,0,28,5,9,22,22,0,2,32,27)$&$(7,29,11,38,30,21,8,12,26,10,22,24)$\\
\noalign{\hrule height0.8pt}
\end{tabular}
}
\end{table}

\begin{table}[thb]
\caption{Extremal Type~II $\ZZ_{2k}$-codes of length $32$ $(k =4,5,\ldots,10)$}
\label{Tab:32-2}
\centering
\medskip
{\small
%{\footnotesize
%{\scriptsize
\begin{tabular}{c|c|ll}
\noalign{\hrule height0.8pt}
Codes & $A$ & \multicolumn{1}{c}{$x$}&\multicolumn{1}{c}{$y$} \\
\hline
$C_{8,32,3}$&$M_{8,32,1}$&
$(0,0,7,6,4,1,1,3,5,0,3,0,0,1,3,2)$&$(0,0,0,0,3,7,2,2,7,1,0,4,2,4,6,2)$\\
$C_{8,32,4}$&$M_{8,32,1}$&
$(5,5,0,5,0,1,6,3,1,3,3,0,1,1,1,1)$&$(0,0,6,3,4,7,1,6,4,3,4,5,5,7,1,4)$\\
$C_{8,32,5}$&$M_{8,32,2}$&
$(6,0,2,4,4,1,3,6,4,5,1,0,0,0,0,4)$&$(0,0,0,0,0,1,3,7,5,1,7,4,0,1,4,5)$\\
$C_{10,32,3}$&$M_{10,32,1}$&
$(4,1,3,6,5,4,2,6,6,3,0,6,1,1,3,5)$&$(0,0,0,0,2,7,8,7,4,1,0,8,0,6,6,9)$\\
$C_{10,32,4}$&$M_{10,32,1}$&
$(3,3,8,8,2,9,0,4,7,3,1,4,0,1,1,4)$&$(5,2,3,6,1,1,2,9,4,1,7,5,2,2,0,0)$\\
$C_{10,32,5}$&$M_{10,32,2}$&
$(3,9,9,8,2,6,4,1,5,8,6,3,0,0,3,5)$&$(0,0,0,1,1,1,8,8,2,5,4,1,3,3,9,2)$\\
$C_{12,32,3}$&$M_{12,32,1}$&
$(6,7,1,0,8,7,4,3,2,8,0,11,0,1,3,3)$&$(0,0,0,0,8,4,1,8,2,8,4,7,1,2,10,5)$\\
$C_{12,32,4}$&$M_{12,32,1}$&
$(10,3,9,4,3,11,6,11,11,2,0,0,0,1,1,2)$&$(2,10,11,10,1,0,0,7,5,7,0,1,0,7,7,10)$\\
$C_{12,32,5}$&$M_{12,32,1}$&
$(1,8,4,3,7,1,2,11,6,0,1,7,0,0,3,0)$&$(0,11,3,5,7,9,6,9,11,3,11,5,0,1,4,5)$\\
%%%%%%%%%%%%%%%%%
$C_{14,32,3}$&$M_{14,32,1}$&
$(0,2,13,11,1,7,4,3,7,3,12,6,0,1,0,6)$&
$(0,11,11,11,11,8,11,13,8,11,9,7,0,11,7,7)$\\
$C_{14,32,4}$&$M_{14,32,1}$&
$(11,12,1,1,3,9,11,7,13,5,12,13,0,5,8,5)$&
$(8,3,13,13,9,13,4,1,5,0,13,3,3,3,7,2)$\\
$C_{14,32,5}$&$M_{14,32,2}$&
$(10,6,7,5,10,2,5,0,5,4,13,3,0,0,1,1)$&
$(1,4,7,7,10,11,5,11,4,0,11,12,0,1,2,10)$\\
%%%%%%%%%%%%%%%%%
$C_{16,32,3}$&$M_{16,32,1}$&
$(15,0,3,3,12,5,11,0,15,5,14,6,0,0,0,3)$&
$(1,2,9,9,13,15,5,12,11,7,3,14,3,2,11,13)$\\
$C_{16,32,4}$&$M_{16,32,1}$&
$(3,11,1,0,12,3,10,10,10,11,6,12,1,1,3,0)$&
$(15,8,14,1,1,13,2,2,2,6,5,11,0,1,3,6)$\\
$C_{16,32,5}$&$M_{16,32,2}$&
$(1,11,15,3,11,15,8,5,10,8,12,0,0,0,2,7)$&
$(2,14,3,10,2,11,11,0,11,13,1,1,0,1,0,12)$\\
%%%%%%%%%%%%%%%%%%%%%%%
$C_{18,32,3}$&$M_{18,32,1}$&
$(9,4,16,10,2,7,4,7,2,3,2,5,1,0,3,5)$&
$(13,0,9,4,17,16,2,0,0,12,8,2,0,14,3,10)$\\
$C_{18,32,4}$&$M_{18,32,1}$&
$(6,12,5,7,5,6,6,7,16,15,5,9,1,0,2,4)$&
$(3,6,6,4,13,12,2,13,7,12,4,1,3,1,4,9)$\\
$C_{18,32,5}$&
$M_{18,32,2}$&
$(15,17,15,12,10,2,5,7,6,6,12,7,0,1,1,2)$&
$(12,16,1,17,13,16,10,14,7,9,4,9,1,10,0,13)$\\
%%%%%%%%%%%%%%%%%%%
$C_{20,32,3}$&
$M_{20,32,1}$&
$(1,5,8,10,14,18,4,18,17,11,18,16,0,0,2,6)$&
$(6,12,15,14,6,2,5,16,6,18,6,10,3,8,5,18)$\\
$C_{20,32,4}$&
$M_{20,32,1}$&
$(6,6,15,6,6,3,1,18,1,19,11,10,0,0,3,5)$&
$(8,5,5,9,4,18,9,4,0,5,7,15,2,15,14,18)$\\
$C_{20,32,5}$&
$M_{20,32,1}$&
$(5,13,11,19,5,2,18,8,6,9,19,12,0,0,6,3)$&
$(19,14,13,18,2,14,3,2,10,12,12,8,19,16,12,2)$\\
\noalign{\hrule height0.8pt}
\end{tabular}
}
\end{table}
\end{landscape}

\end{document}